\providecommand{\algorithmname}{Algorithm}
\newtheorem{thm}{Theorem}
\newtheorem{example}{Example}
\newtheorem{cor}{Corollary}
\newtheorem{lem}{Lemma}
\newtheorem{prop}{Proposition}
\newtheorem{defn}{Definition}
\newtheorem{prob}{Problem}
\begin{document}
\global\long\def\ones{\mathds{1}}%
\global\long\def\G{\mathcal{G}}%
\global\long\def\E{\mathcal{E}}%
\global\long\def\V{\mathcal{V}}%
\global\long\def\M{\mathcal{M}}%
\global\long\def\Y{\mathcal{Y}}%
\global\long\def\U{\mathcal{U}}%
\global\long\def\C{\mathcal{C}}%
\global\long\def\T{\mathcal{T}}%
\global\long\def\graph{\mathcal{G}=\left(\mathcal{V},\mathcal{E}\right)}%
\global\long\def\TM{T_{\left(\mathcal{T},\mathcal{C}\right)}}%
\global\long\def\HT{\mathcal{H}_{2}}%
\global\long\def\R{\mathbb{R}}%
\global\long\def\diag{\mathrm{diag}}%
\global\long\def\dim{\mathrm{dim}}%
\global\long\def\trace{\text{Tr}}%
\global\long\def\Hinf{\mathcal{H}_{\infty}}%
\global\long\def\Tree{\mathbb{T}\left(\G\right)}%
\global\long\def\gcont{\G\sslash\pi}%
\global\long\def\N{\mathcal{N}}%


\title{Product Form of Projection-Based Model Reduction\\ and its Application to Multi-Agent Systems} 

\author{Noam Leiter and Daniel Zelazo,
\thanks{N. Leiter and D. Zelazo are with the Faculty of Aerospace Engineering, Israel Institute of Technology, Haifa, Israel.
    {\tt\small noaml@campus.technion.ac.il, dzelazo@technion.ac.il}.}
}
\maketitle

\begin{abstract}                          
Orthogonal projection-based reduced order models (PROM) are the output of widely-used
model reduction methods. In this work, a novel product form is derived
for the reduction error system of these reduced models, and it is
shown that any such PROM can be obtained from a sequence of 1-dimensional projection reductions. Investigating the error system product form, we then
define interface-invariant PROMs, model order reductions with projection-invariant input and output matrices, and it is shown that for such PROMs the error product systems are strictly proper. Furthermore, exploiting this structure, an analytic $\mathcal{H}_{\infty}$ reduction error bound is obtained and an $\mathcal{H}_{\infty}$ bound optimization
problem is defined.
Interface-invariant reduced models are natural to graph-based model
reduction of multi-agent systems where subsets of agents function as
the input and output of the system. In the second part of this study,
graph contractions are used as a constructive solution approach to
the $\mathcal{H}_{\infty}$ bound optimization problem for multi-agent
systems. Edge-based contractions are then utilized in a greedy-edge
reduction algorithm and are demonstrated for the model reduction of a
first-order Laplacian controlled consensus protocol. 

\end{abstract}

\section{Introduction}\label{sec.Intro}

Model-order reduction is an essential tool for the design and study
of large-scale systems introduced by modern technologies. Of particular interest is the study of model reduction for the design, simulation, and implementation  of controllers for large-scale systems. For example, optimal controllers for  linear systems are often at least the order of the physical system model
\cite{dullerud2013course}. In order to implement low-order controllers for large scale systems,
model reduction of the design model or full-order controller is commonly performed \cite{hyland1984optimal,leiter2019optimal}.

A widely-used family of reduced-order models are the \textit{projection-based
reduced order models} (PROMs). Well established PROM producing methods,
such as truncated balanced realizations, preserve stability, guarantee
minimality and provide \textit{a priori} reduction error bounds \cite{moore1981principal}.
These methods, however, may be unfeasible for very large-scale systems
due to their computational complexity \cite{benner2005model}. As a result, many works aimed at finding sub-optimal efficient solutions,
e.g. by alternating projection methods \cite{grigoriadis1996low}, or Krylov-subspace techniques which are computationally
efficient and suitable for extremely large-scale systems \cite{antoulas2000survey}. Such methods,
however, may fail to provide stable and minimal reduced order
realizations \cite{jaimoukha1997implicitly}.

An important instance of these large-scale systems are multi-agent systems.  
A great challenge in the study of multi-agent systems
is to find efficient reduction methods that guarantee stability and
assure minimality of their reduced order realizations, preferably,
with optimal or suboptimal reduction errors. For networked multi-agent
systems, the interconnection properties of the underlying graph can
be related in some cases to the stability, controllability and observability
of the system \cite{aguilar2014graph}. Therefore, it is desirable
that reduced-order models are found that preserve, in some sense, the network structure
of the full-order system, which cannot be performed with standard
unstructured model-reduction methods. In that direction, several recent
studies were performed. In \cite{monshizadeh2014projection,ishizaki2014model},
PROMs of consensus-type multi-agent models were considered based on
graph-contractions over vertex partitions. In \cite{jongsma2015model},
removal of cycle completing edges was suggested for model simplification of the consensus protocol. The reduction of second-order network systems with structure preservation using hierarchical $\mathcal H_2$ clustering was demonstrated in \cite{cheng2017reduction}. While these methods where limited to first or second-order multi-agent models, in \cite{Leiter2017a}, a more general graph-based model reduction was presented that preserves the functional structure of the multi-agent system. 
A framework for optimal structured model-order reduction of multi-agent systems was recently presented in \cite{yu2021h_2}.  Here, a convex relaxation technique was derived for the $\mathcal H_2$ model reduction of diffusively coupled second-order networks.

In this work, we reexamine the well known orthogonal PROMs and their realizations. The error system of PROMs is naturally  described with an augmented system realization, allowing the PROM reduction error to be evaluated with standard system performance metrics, such as the $\mathcal{H}_{\infty}$ and $\mathcal{H}_{2}$ norms. However, this error system realization usually does not provide any analytic insight, and various transformation techniques were derived to bring it to more useful forms. These include an upper triangular block structure \cite{ishizaki2014model}, allowing derivation of analytical bounds.  In this study we show that any orthogonal PROM error system can be presented as a product of three LTI systems, capturing the reduction effect of the input to state, state to output, and the internal dynamical structure. Investigating this error system product
form, an analytic $\mathcal{H}_{\infty}$ reduction error bound is
obtained and an $\mathcal{H}_{\infty}$ bound optimization problem
is defined as a relaxation of the optimal PROM problem. Furthermore,
it is shown that any orthogonal PROM can be obtained from a sequence
of singleton projections, projections from dimension $n$ to $n-1$. 

Of particular interest in this work are
\emph{interface invariant }PROMs (IIPROMs). These are reduced models
that maintain the input-output structure of the full order model. It is shown that for IIPROMs the error product systems are strictly
proper.
%
IIPROMs are
natural to multi-agent systems where a subset of agents serve as input
and output ports of the network.  These I/O ports may be interconnected with an
external controller and it is required, therefore, that any reduced model preserves this interface structure.  For this purpose, we propose an edge-based graph
contraction method and utilize it in a tree-based greedy-edge heuristic to solve the
PROM $\Hinf$ bound optimization problem. We then apply
an this graph contraction algorithm to obtain suboptimal $\Hinf$ IIPROMs of Laplacian
consensus systems.

The remaining sections of this paper are as follows. In Section \ref{sec:Problem-Formulation},
we formulate the optimal orthogonal PROM and IIPROM problems. In Section
\ref{sec:The-Product-Form}, the product form of orthogonal PROMs is derived. In Section \ref{sec:The-Orthogonal-PROM hinf bound},
the $\Hinf$ error bound is derived for PROMs and PROM sequences.
Section \ref{sec:Model-Reduction-of} presents model reduction of
multi-agent systems by graph contractions and the greedy-edge optimization
method. In Section \ref{sec:Case-Studies}, these results are demonstrated
with some numerical examples of model reduction of a multi-agent system,
and Section \ref{sec:Conclusions} provides concluding remarks.

\paragraph*{Notations}

The spectrum of a real matrix $A\in\mathbb{R}^{n\times n}$ is the
set of eigenvalues $\lambda\left(A\right)=\left\{ \lambda_{k}\left(A\right)\right\} _{k=1}^{n}$
where $\lambda_{k}\left(A\right)\in\mathbb{C}$ is the $k$th eigenvalue
of $A$ in ascending order, $\left|\lambda_{1}\right|\leq\left|\lambda_{2}\right|\leq\ldots\leq\left|\lambda_{n}\right|$.
The corresponding eigenvectors are $\left\{ u_{k}\left(A\right)\right\} _{k=1}^{n}$.
For a symmetric matrix we have an eigenvalue decomposition $A=P\left(A\right)\Lambda\left(A\right)P^\top\left(A\right)$,
where $P\left(A\right)=\left[u_{1}\left(A\right),u_{2}\left(A\right),\ldots,u_{n}\left(A\right)\right]$
is an orthonormal matrix and $\Lambda\left(A\right)=\diag\left(\lambda\left(A\right)\right)\in\mathbb{R}^{n\times n}$.
A symmetric matrix is \textit{positive-definite} if $\lambda_{i}\left(A\right)>0$
for $i\in\left[1,n\right]$ and is denoted as $A\succ0$. The $2$-norm of a matrix $A$ is $\|A \|_2=\max_i(\sqrt{\lambda_i(A^TA)})$.
For two
matrices $A$ and $B$, $\diag(A,B)$ is a block diagonal matrix with
$A$,$B$ on the diagonal. The entries of a matrix $A$ are denoted
$\left[A\right]_{ij}$. The $i$th Euclidean basis column vector is
denoted as $\mathbf{e}_{i}$. The Kronecker product is denoted by $\otimes$.

A graph $\mathcal{G}=\left(\mathcal{V},\E,\mathcal{W}\right)$ consists
of a vertex set $\V\left(\G\right)$, an edge set $\E\left(\G\right)=\{\epsilon_{1},\ldots,\epsilon_{|\E|}\}$
with $\epsilon_{k}\in\V^{2}$, and a set of edge weights, $\mathcal{W}\left(\G\right)=\{w_{1},\ldots,w_{|\E|}\}$ with $w_i \in \mathbb{R}$.
The order of the graph is defined as the number of nodes, $|\V|$. Two nodes
$u,v\in\V\left(\G\right)$ are \textit{adjacent} if they are the endpoints
of an edge $\left\{ u,v\right\} $, and we denote this by $u\sim v$.
If $\G$ is an \textit{undirected graph} then the head and tail of
each edge are arbitrary. A \textit{self-loop} is an edge were the head and tail are the same node,
and \textit{duplicate edges} are any pair $\epsilon_{i},\epsilon_{j}\in\E$
with the same head and tail nodes.
A \textit{simple graph} does not include self-loops or duplicate edges. A \textit{multi-graph}
is a graph that includes duplicate edges.

\section{Problem Formulation \label{sec:Problem-Formulation}}

An LTI system $\Sigma$ with realization $\Sigma:=\left(A,B,C,D\right)$
is the dynamical system, 
\begin{equation}
\begin{cases}
\dot{x}(t) & =Ax(t)+Bu(t)\\
y(t) & =Cx(t)+Du(t)
\end{cases},\label{eq:LTI model}
\end{equation}
where $x(t)\in\mathbb{R}^{n_{x}}$ is the system state, $u(t) \in\mathbb{R}^{n_{u}}$
are the inputs and $y(t) \in\mathbb{R}^{n_{y}}$ are the
outputs. The matrices $A\in\R^{n_{x}\times n_{x}}$, $B\in\R^{n_{x}\times n_{u}}$,
$C\in\R^{n_{y}\times n_{x}}$ and $D\in\R^{n_y\times n_u}$ are the system
parameters. 
The corresponding
transfer function matrix (TFM) representation of $\Sigma$ is given as
\begin{equation}
\hat{\Sigma}\left(s\right)=D+C\left(sI_{n}-A\right)^{-1}B.\label{eq:TFM}
\end{equation}
Hereafter, the notation $\hat{\Sigma}$ will be used, without the explicit dependence on $s$, to denote the TFM of a system $\Sigma$.

A realization
$\Sigma:=\left(A,B,C,D\right)$ is \textit{minimal} if it is controllable
and observable. The order of a system is its \textit{McMillan degree},
denoted as $\deg\left(\Sigma\right)$, which is the order of any minimal
realization of $\Sigma$ \cite{de2000minimal}.

We consider also \emph{multi-agent systems} (MAS) as a set of $n$ agents  that interact with each other over a network described by a graph $\G$.  We assume further that in such a network, a subset of the agents may be subject to external control inputs, and a subset may be accessed for measurements.  Formally, we denote  the \emph{input nodes} by  the set $\U\subseteq\V\left(\G\right)$ with $|\U| =m$, and the set of \emph{output nodes} by the set $\Y\subseteq\V\left(\G\right)$ with $|\Y| = p$.  In the case of linear MAS, we then have the realization
\begin{equation}
\hspace{-.3cm}\Sigma\left(\G,\mathcal{U},\Y\right):=\left(A\left(\G\right),B\left(\G,\mathcal{U}\right),C\left(\G,\Y\right),D\left(\mathcal{U},\Y\right)\right),\label{eq:controlled multi-agent system}
\end{equation}
where the matrices $A\left(\G\right)\in\R^{n_{x}\times n_{x}}$, $B\left(\G,\mathcal{U}\right)\in\R^{n_{x}\times n_{u}}$,
$C\left(\G,\Y\right)\in\R^{n_{y}\times n_{x}}$ and $D\left(\U,\Y\right)\in\R^{n_{y}\times n_{u}}$
are the system matrices as a function of the underlying graph structure, with $n_{x}=d_{x}\times n$, $n_{u}=d_{u}\times m$ and $n_{y}=d_{y}\times p$. Here, $d_x,d_u,$ and  $d_y$ represent the dimension of the state, input, and output of each agent in the network.  This is system is visualized in Figure \ref{fig:An interface multi-agent system}.  Note that MAS models of this form include classical setups such as diffusively coupled networks \cite{burger2014duality} and Laplacian dynamics \cite{mesbahi2010graph}. 


\begin{center}
\begin{figure}
\centering{}\includegraphics[width=2.5in]
{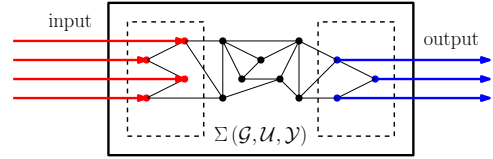}
\caption{A controlled  MAS with interface agents, red nodes identify the set $\mathcal{U}$, blue nodes the set $\mathcal{Y}$.\label{fig:An interface multi-agent system}}
\end{figure}
\end{center}

A \textit{reduced-order model} of $\Sigma$ is any system with realization $\Sigma_{r}:=\left(A_{r},B_{r},C_{r},D_{r}\right)$
mapping $u\left(t\right)\mapsto y_{r}\left(t\right)$, with $u\left(t\right)\in\mathbb{R}^{n_{u}}$
and $y_{r}\left(t\right)\in\mathbb{R}^{n_{y}}$, such that $\deg\left(\Sigma_{r}\right)<\deg\left(\Sigma\right)$.
Reduction error analysis can be performed by constructing an augmented
error system, 
\begin{equation}
\Sigma_{e}=\Sigma-\Sigma_{r},\label{eq: reduction error system}
\end{equation}
with realization $\Sigma_{e}:=\left(A_{e},B_{e},C_{e},D_{e}\right)$,
where $x_{e}(t)=\begin{bmatrix}x^\top(t)&x_{r}^\top(t)\end{bmatrix}^\top$, $y_{e}(t)=y(t)-y_{r}(t)$, $A_{e}=\diag(A,A_{r})$,
$B_{e}=\begin{bmatrix}B^\top& B_{r}^\top\end{bmatrix}^\top$, $C_{e}=\begin{bmatrix}C & -C_{r}\end{bmatrix}$
and $D_{e}=D-D_{r}$.
The reduction error can then  be quantified using any system norm $\| \Sigma_{e}\|$. The two
most studied model reduction system norms are the $\HT$-norm and
the $\Hinf$-norm \cite{doyle1989state}, of which the $\Hinf$-norm
is the focus of this work. The $\Hinf$-norm of a stable proper system $\Sigma$ is given as
\begin{equation}
\| \Sigma\| _{\Hinf}=\sup_{\omega\in\R}\bar{\sigma}\left(\hat\Sigma\left(j\omega\right)\right),
\end{equation}
where $\bar{\sigma}(M)$ is the largest singular value of the matrix $M$.

A widely used family of reduction methods are \textit{projection-based
reduction}s. Given a system with realization $\Sigma:=\left(A,B,C,D\right)$ of order
$n$, a \textit{projection-based reduced order model} (PROM) is a system $\Sigma_{r}:=\left(P^\top AV,P^\top B,CV,D\right)$,
for any two matrices $P,V\in\mathbb{R}^{n\times r}$ such that $P^\top V=I_{r}$
\cite{gallivan2004sylvester}. If in addition $P=V$, the PROM is termed \emph{orthogonal}, i.e.,
\begin{equation}
\Sigma_{r}\left(\Sigma,P\right):=\left(P^\top AP,P^\top B,CP,D\right).\label{eq:PROM}
\end{equation}
Hereafter, all PROMs referred to in this study are orthogonal.

In this work we will examine a special class of PROMs that we term \emph{interface-invariant PROMs }. These are reduced models that maintain the  input-output  structure  of  the  full  order  model under the projection operation. Such PROMs are required, for example, for the reduction of controlled MAS (\ref{eq:controlled multi-agent system}) where the interface agent structure is maintained in the reduced model. We will also show that IIPROMs arise naturally when examining the error system (\ref{eq: reduction error system}) of PROMs.  
We now define formally the notion of an interface-invariant PROM (IIPROM).
\begin{defn}[IIPROM]\label{IIPROM} Given a system with realization
$\Sigma:=\left(A,B,C,D\right)$, an IIPROM of $\Sigma$ is any PROM $\Sigma_{r}\left(\Sigma,P\right):=\left(P^\top AP,P^\top B,CP,D\right)$
such that $C=CPP^\top$ and $B=PP^\top B$. \end{defn}
Note that an IIPROM does not require $PP^\top=I$, e.g., for $C=B^\top=[1\,0\,0]$ we can choose 
$$P=\frac{1}{\sqrt{2}}\left[\begin{array}{ccc} 1 & 1\\ 1 &  -1 \\ 0 & 0
\end{array}\right].$$ 
With the above notions in place, we can now formally state the optimal IIPROM problem.

\begin{prob}[optimal IIPROM]\label{optimal IIPROM}
Consider a stable proper system of order $n$ with realization $\Sigma:=\left(A,B,C,D\right)$ .
 Find $P\in\mathbb{R}^{n\times r}$ with $P^\top P=I_{r}$ such that
the PROM $\Sigma_{r}\left(\Sigma,P\right)$ \eqref{eq:PROM}
minimizes the $\Hinf$-norm of the reduction error system \eqref{eq: reduction error system}
and is interface-invariant, i.e., 
\begin{align}
\min_{P\in\mathbb{R}^{n\times r}} & \| \Sigma_{e}\| _{\Hinf}\label{eq:minimize the reduction error}\\
s.t.\,\,\, 
 & P^\top P=I_{r}\nonumber \\
 & C=CPP^\top\nonumber \\
 & B=PP^\top B.\nonumber 
\end{align}
\end{prob}
The constraints $P^\top P=I_{r}$, $C=CPP^\top$ and $B=PP^\top B$ make Problem \ref{optimal IIPROM} non-convex, and  there is, in general,
no closed-form or computationally efficient solution. In the following section we investigate the error system structure of IIPROMs and derive an IIPROM $\Hinf$ error upper bound. This bound will then be utilised for obtaining suboptimal solutions of Problem \ref{optimal IIPROM}. 

\section{The PROM $\protect\Hinf$ Bound\label{sec:The-Orthogonal-PROM hinf bound}}

The error system of PROMs can be described  with the augmented system realization (\ref{eq: reduction error system}) of dimension $\dim(x)+\dim(x_r)$. In the following section we show that any orthogonal PROM error system can be presented as a product of three appropriately defined LTI systems, two of dimension $\dim(x_r)$ and the third of dimension $\dim(x)$. This product form is then applied for the derivation of a PROM $\protect\Hinf$ error bound.  

\subsection{The Product Form of Orthogonal PROMs\label{sec:The-Product-Form}}
The following theorem presents a PROM error system product form, and this new error system structure will allow us to derive an $\Hinf$ bound for the PROM error system.

\begin{thm}[PROM error system product form]\label{PROM error system} Let $\Sigma:=(A,B,C,D)$,
and consider a PROM $\Sigma_{r}(\Sigma,P)$.
Then the error reduction system $\Sigma_{e}(\Sigma,P)=\Sigma_{r}(\Sigma,P)-\Sigma$
has the TFM 
\begin{equation}
\hat{\Sigma}_{e}(\Sigma,P)=C\Phi^{-1}Q(Q^\top\Phi^{-1}Q)^{-1}Q^\top\Phi^{-1}B,\label{eq:PROM error TFM}
\end{equation}
where $\Phi\triangleq sI_{n}-A$, and $Q$ is any projection such that $Q^\top P=0$ and 
$Q^\top Q=I_{n-r}$. 
Furthermore, $\Sigma_{e}$
can be expressed as the product of three systems,
\begin{equation}
\Sigma_{e}(\Sigma,P)=\Theta(\Sigma,P)\Delta(\Sigma,P)\Gamma(\Sigma,P),\label{eq:PROM error realization}
\end{equation}
with realizations 
\begin{align}
\Theta(\Sigma,P)&:=(A_{PP},A_{PQ},CP,CQ)\label{lti.1},\\
\Gamma(\Sigma,P)&:=(A_{PP},P^\top B,A_{QP},Q^\top B)\label{lti.2},\\
\Delta(\Sigma,P)&:=(A,Q,Q^\top,0_{p\times m}),\label{lti.3}
\end{align}
where $A_{PP}\triangleq P^\top AP$,
$A_{PQ}\triangleq P^\top AQ$ and $A_{QP}\triangleq Q^\top AP$.
\end{thm}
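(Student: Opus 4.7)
The plan is to first establish formula \eqref{eq:PROM error TFM} by a matrix identity and then verify that the product $\hat{\Theta}\hat{\Delta}\hat{\Gamma}$ reproduces it. The starting observation is that since $P^\top P = I_r$, $Q^\top Q = I_{n-r}$, and $Q^\top P = 0$, the matrix $T := [P\ Q] \in \R^{n\times n}$ is orthogonal, so $PP^\top + QQ^\top = I_n$. In this basis, $T^\top A T$ has the block structure with entries $A_{PP}$, $A_{PQ}$, $A_{QP}$, and $A_{QQ} := Q^\top A Q$, and $T^\top \Phi T$ has the corresponding block form with $\Phi_P := sI_r - A_{PP}$ on the top-left and $\Phi_Q := sI_{n-r} - A_{QQ}$ on the bottom-right.

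The key technical step is the matrix identity
\[
\Phi^{-1} - P\Phi_P^{-1} P^\top = \Phi^{-1} Q \,(Q^\top \Phi^{-1} Q)^{-1}\, Q^\top \Phi^{-1},
\]
which I would prove by inverting $T^\top \Phi T$ via the Schur complement of its top-left block. Writing $S := \Phi_Q - A_{QP}\Phi_P^{-1} A_{PQ}$, both $\Phi^{-1}$ and $Q^\top \Phi^{-1} Q = S^{-1}$ admit closed-form block expressions in terms of $S$, $\Phi_P$, $A_{PQ}$, and $A_{QP}$, and a direct comparison of the two sides gives the identity. Left-multiplying by $C$ and right-multiplying by $B$, the $D$ terms cancel and the left side becomes $\hat{\Sigma} - \hat{\Sigma}_r$, yielding formula \eqref{eq:PROM error TFM}.

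For the product realization, I would compute the TFMs of \eqref{lti.1}--\eqref{lti.3} directly as
\[
\hat{\Theta} = CQ + CP\Phi_P^{-1} A_{PQ},\quad \hat{\Delta} = Q^\top \Phi^{-1} Q,\quad \hat{\Gamma} = Q^\top B + A_{QP}\Phi_P^{-1} P^\top B.
\]
The Schur-complement formulas from the previous step also yield the factorizations $C\Phi^{-1} Q = \hat{\Theta}\cdot (Q^\top \Phi^{-1} Q)$ and $Q^\top \Phi^{-1} B = (Q^\top \Phi^{-1} Q)\cdot \hat{\Gamma}$, from which
\[
\hat{\Theta}\hat{\Delta}\hat{\Gamma} = C\Phi^{-1} Q (Q^\top \Phi^{-1} Q)^{-1} Q^\top \Phi^{-1} B = \hat{\Sigma}_e
\]
is immediate.

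The main obstacle is not conceptual but the careful bookkeeping of the $2\times 2$ block Schur-complement inversion: both the central identity and the factorizations of $C\Phi^{-1} Q$ and $Q^\top \Phi^{-1} B$ require tracking signs and block positions without error. A useful sanity check is that the state matrix of $\Delta$ is the full $A$ (dimension $n$), not the reduced $A_{PP}$, so $\hat{\Delta}$ does capture the $n$-dimensional internal dynamics that remain after the reduced-dimensional ``interface'' blocks $\Theta$ and $\Gamma$ are factored out, consistent with $\deg(\Theta\Delta\Gamma)$ not exceeding $\dim(x)+2\dim(x_r)$.
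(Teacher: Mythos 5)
Your proposal is correct and follows essentially the same route as the paper: the central identity $\Phi^{-1}-P(P^\top\Phi P)^{-1}P^\top=\Phi^{-1}Q(Q^\top\Phi^{-1}Q)^{-1}Q^\top\Phi^{-1}$ is exactly the paper's matrix inverse lifting lemma (which the paper also derives from block/Schur-complement inversion of $[P\ Q]^\top\Phi[P\ Q]$), and your factorizations $C\Phi^{-1}Q=\hat\Theta\,(Q^\top\Phi^{-1}Q)$ and $Q^\top\Phi^{-1}B=(Q^\top\Phi^{-1}Q)\,\hat\Gamma$ reproduce the paper's use of the off-diagonal block-inverse formula to absorb $(Q^\top\Phi^{-1}Q)^{-1}$ into the outer factors. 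The only discrepancy is an overall sign: your identity gives $\hat\Sigma-\hat\Sigma_r$ on the left, matching the paper's earlier convention $\Sigma_e=\Sigma-\Sigma_r$ rather than the theorem's $\Sigma_e=\Sigma_r-\Sigma$ (an inconsistency already present in the paper and immaterial for the norm bounds).
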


The proof of Theorem \ref{PROM error system} is given in the Appendix.

Investigating the error systems $\Theta(\Sigma,P)$
and $\Gamma(\Sigma,P)$ we observe that if the PROM
is an IIPROM, then its realization is strictly-proper.

\begin{cor}[IIPROM error system]\label{Interface-invariant error system}
If $\Sigma_{r}(\Sigma,P):=\left(P^\top AP,P^\top B,CP,D\right)$
is an IIPROM, then $\Theta(\Sigma,P)$ in \eqref{lti.1}
and $\Gamma(\Sigma,P)$ in \eqref{lti.2}
are strictly-proper with realizations
\begin{equation}
\Theta(\Sigma,P):=\left(A_{PP},A_{PQ},CP,0_{p\times m}\right),
\end{equation}
and
\begin{equation}
\Gamma(\Sigma,P):=\left(A_{PP},P^\top B,A_{QP},0_{p\times m}\right).
\end{equation}
\end{cor}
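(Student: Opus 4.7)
The plan is to unwind the definitions of $\Theta(\Sigma,P)$ and $\Gamma(\Sigma,P)$ given in Theorem~\ref{PROM error system} and observe that the only difference between the claimed strictly-proper realizations and the ones in \eqref{lti.1}--\eqref{lti.2} is the feedthrough terms $CQ$ and $Q^\top B$. Hence the entire corollary reduces to showing that under the IIPROM conditions $C = CPP^\top$ and $B = PP^\top B$, these two feedthrough matrices vanish.

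First I would recall the defining relation between $P$ and $Q$ from the statement of Theorem~\ref{PROM error system}: $Q$ is chosen so that $Q^\top P = 0$ and $Q^\top Q = I_{n-r}$. In particular, transposing the first identity gives $P^\top Q = 0$, which is the key orthogonality I will exploit.

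Next, starting from the feedthrough of $\Theta(\Sigma,P)$, I would substitute the IIPROM hypothesis $C = CPP^\top$ into $CQ$ to obtain
\begin{equation}
CQ = CPP^\top Q = CP\cdot 0 = 0_{p\times(n-r)},
\end{equation}
and dually, using $B = PP^\top B$,
\begin{equation}
Q^\top B = Q^\top P P^\top B = 0\cdot P^\top B = 0_{(n-r)\times m}.
\end{equation}
Substituting these zeros into \eqref{lti.1} and \eqref{lti.2} immediately yields the two claimed realizations, and because their $D$-matrices are zero, both $\Theta(\Sigma,P)$ and $\Gamma(\Sigma,P)$ are strictly proper by definition.

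The statement is essentially a direct algebraic consequence of the IIPROM hypothesis combined with the orthogonality $P^\top Q = 0$; there is no real obstacle to the proof. The only thing to be careful about is making the complementarity relation $P^\top Q = 0$ explicit at the outset, since it is this identity (not $PP^\top = I$, which need not hold as the remark after Definition~\ref{IIPROM} emphasizes) that allows $CPP^\top Q$ and $Q^\top P P^\top B$ to collapse to zero.
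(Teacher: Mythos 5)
Your proof is correct and takes essentially the same approach as the paper's: both reduce the corollary to showing that the feedthrough terms satisfy $CQ=0$ and $Q^\top B=0$ as a direct algebraic consequence of the IIPROM conditions. Your version is in fact marginally more direct, using $P^\top Q=0$ immediately in $CQ=CPP^\top Q=0$, whereas the paper first rewrites $PP^\top=I_n-QQ^\top$ and then cancels; both are valid one-line computations.
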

\begin{proof}Since $\Sigma_{r}(\Sigma,P)$ is an
IIPROM, from Definition \ref{IIPROM} we have
$C=CPP^\top$ and $B=PP^\top B$. From $Q^\top P=0$
and $Q^\top Q=I_{n-r}$ we get $PP^\top=I_{n}-Q^\top Q$
such that $C=C\left(I_{n}-QQ^\top\right)$ and
$B=\left(I_{n}-QQ^\top\right)B$, therefore, $CQ=0$
 and $Q^\top B=0$ obtaining our desired result.  $\qed$ 
\end{proof}
\subsection{The PROM Error System Bound}

The PROM error system is the product of three systems \eqref{eq:PROM error realization}, and we will make use of this form to derive an $\Hinf$ reduction error upper bound. 

\begin{prop}[PROM error bound]\label{PROM error bound} Let $\Sigma:=\left(A,B,C,D\right)$ with $A$ Hurwitz, and consider a PROM $\Sigma_{r}(\Sigma,P):=\left(P^\top AP,P^\top B,CP,D\right)$.
Then the $\Hinf$ norm of the error reduction system \eqref{eq: reduction error system}
is bounded as 
$$\| \Sigma_{e}(\Sigma,P)\| _{\Hinf}\leq b(\Sigma,P),$$
where 
\begin{equation}
b(\Sigma,P)=\| \Theta(\Sigma,P)\| _{\Hinf}\|\Delta(\Sigma,P)\| _{\Hinf}\| \Gamma(\Sigma,P)\| _{\Hinf} .\label{eq:PROM error bound equation}
\end{equation}
\end{prop}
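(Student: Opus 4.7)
The plan is to invoke Theorem \ref{PROM error system}, which has already done the heavy lifting by factoring the error system into the cascade $\Sigma_{e}(\Sigma,P)=\Theta(\Sigma,P)\Delta(\Sigma,P)\Gamma(\Sigma,P)$, and then simply exploit the sub-multiplicativity of the $\Hinf$-norm (i.e.\ of the spectral norm on transfer-matrix values). So essentially the bound is a one-line consequence of the product form, plus a standard sub-multiplicative argument, plus a check that all three factors have finite $\Hinf$-norm.

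First I would pass to the frequency domain. By Theorem~\ref{PROM error system}, the TFM factorizes as
\begin{equation*}
\hat{\Sigma}_{e}(j\omega)=\hat{\Theta}(j\omega)\,\hat{\Delta}(j\omega)\,\hat{\Gamma}(j\omega),
\end{equation*}
for every $\omega\in\R$ (noting that the inner dimensions match: $\Theta$ is $p\times (n-r)$, $\Delta$ is $(n-r)\times(n-r)$, and $\Gamma$ is $(n-r)\times m$). Then I would apply sub-multiplicativity of the largest singular value, $\bar{\sigma}(M_{1}M_{2})\leq\bar{\sigma}(M_{1})\bar{\sigma}(M_{2})$, twice to obtain the pointwise bound
\begin{equation*}
\bar{\sigma}\bigl(\hat{\Sigma}_{e}(j\omega)\bigr)\leq\bar{\sigma}\bigl(\hat{\Theta}(j\omega)\bigr)\,\bar{\sigma}\bigl(\hat{\Delta}(j\omega)\bigr)\,\bar{\sigma}\bigl(\hat{\Gamma}(j\omega)\bigr).
\end{equation*}
Taking the supremum over $\omega$ and using the elementary inequality $\sup_{\omega}f(\omega)g(\omega)h(\omega)\leq(\sup f)(\sup g)(\sup h)$ for nonnegative functions yields $\|\Sigma_{e}\|_{\Hinf}\leq\|\Theta\|_{\Hinf}\|\Delta\|_{\Hinf}\|\Gamma\|_{\Hinf}=b(\Sigma,P)$, which is the desired inequality.

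The one detail to address is that the right-hand side is meaningful, i.e.\ each factor admits a finite $\Hinf$-norm. The system $\Delta(\Sigma,P)=(A,Q,Q^{\top},0)$ has state matrix $A$, which is Hurwitz by hypothesis, so $\|\Delta\|_{\Hinf}<\infty$. For $\Theta$ and $\Gamma$ the state matrix is $A_{PP}=P^{\top}AP$. In the interesting cases where Problem \ref{optimal IIPROM} is well-posed the PROM $\Sigma_{r}$ is itself required to be stable, which is exactly $A_{PP}$ Hurwitz; I would state this as a standing assumption (or note that otherwise $b(\Sigma,P)=\infty$ and the bound is vacuous). Under Hurwitzness of $A_{PP}$, all three norms are finite and the bound is non-trivial.

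The main (and really only) subtlety is the stability of the factors $\Theta$ and $\Gamma$: orthogonal compression of a Hurwitz matrix need not be Hurwitz in general, so the bound is only useful for those $P$ producing a stable reduced model. Beyond that, the proof is essentially a routine application of sub-multiplicativity to the factorization provided by Theorem~\ref{PROM error system}; no further system-theoretic machinery is needed.
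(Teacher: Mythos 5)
Your proof is correct and takes exactly the route the paper does: the paper's own proof is the one-line observation that the bound follows from sub-multiplicativity of the $\Hinf$-norm applied to the product form \eqref{eq:PROM error realization}. Your added remark that $\Theta$ and $\Gamma$ have state matrix $A_{PP}=P^\top AP$, which need not be Hurwitz, so the bound is only non-vacuous when the reduced model is stable, is a legitimate caveat the paper leaves implicit.
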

\begin{proof} 
%
The proof follows directly from the submultiplicative of the $\Hinf$-norm  applied to \eqref{eq:PROM error realization}. $\qed$
\end{proof}

 For linear systems \eqref{eq:LTI model} with a symmetric matrix $A$, such as the controlled-consensus multi-agent systems studied in Section \ref{sec:Model-Reduction-of}, we can simplify the calculation of the bound \eqref{eq:PROM error bound equation}. The following Lemma, proven in \cite[Appendix A]{liu1998model} will be used for the derivation of Corollary \ref{PROM symmetric error bound}, presenting this simplified bound. 

\begin{lem}[\cite{liu1998model}]\label{symmetric w0 Hinf}Let $A$ be symmetric and Hurwitz.
Then for any $B$ of appropriate dimension,
\begin{equation}
\| B^\top\left(sI-A\right)^{-1}B\| _{\Hinf}=\| B^\top  A^{-1}B\| _{2}.\label{eq:B'*Ainv*B}
\end{equation}
\end{lem}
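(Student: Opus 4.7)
The plan is to exploit the spectral decomposition of $A$ to reduce the matrix $\Hinf$-norm to scalar modulus estimates. Since $A$ is symmetric and Hurwitz, I would write $A=U\Lambda U^\top$ with $U$ orthogonal and $\Lambda=\diag(\lambda_1,\ldots,\lambda_n)$, $\lambda_i<0$. Setting $\tilde B:=U^\top B$ and letting $\tilde b_i^\top$ denote the $i$-th row of $\tilde B$, the transfer matrix $G(s):=B^\top(sI-A)^{-1}B$ expands as
\[
G(j\omega)=\sum_{i=1}^n\frac{\tilde b_i\tilde b_i^\top}{j\omega-\lambda_i},
\]
reducing the problem to estimating a rank-weighted sum of rank-one real symmetric matrices with complex scalar coefficients.

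The upper bound comes from the dual characterization $\|G(j\omega)\|_2=\sup_{\|x\|=\|y\|=1}|y^{*}G(j\omega)x|$. Expanding the bilinear form yields $y^{*}G(j\omega)x=\sum_i(\tilde b_i^\top\bar y)(\tilde b_i^\top x)/(j\omega-\lambda_i)$. Applying Cauchy--Schwarz in the form $|\sum c_iu_iv_i|\leq(\sum|c_i||u_i|^2)^{1/2}(\sum|c_i||v_i|^2)^{1/2}$ and the scalar estimate $|j\omega-\lambda_i|^{-1}\leq|\lambda_i|^{-1}$ gives
\[
|y^{*}G(j\omega)x|^2\leq\Big(\sum_i\frac{|\tilde b_i^\top\bar y|^2}{|\lambda_i|}\Big)\Big(\sum_i\frac{|\tilde b_i^\top x|^2}{|\lambda_i|}\Big).
\]
Each factor is a Hermitian quadratic form associated with $M_0:=-B^\top A^{-1}B$, since $(-A)^{-1}=U|\Lambda|^{-1}U^\top$ implies $\sum_i|\tilde b_i^\top z|^2/|\lambda_i|=z^{*}M_0z$ for every $z\in\mathbb{C}^m$. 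Bounding each factor by $\|M_0\|_2\|z\|^2$ then gives $\|G(j\omega)\|_2\leq\|B^\top A^{-1}B\|_2$ uniformly in $\omega$.

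To show the bound is attained, I would observe that at $\omega=0$ the transfer matrix reduces to the real symmetric matrix $G(0)=-B^\top A^{-1}B=M_0$, so $\|G(0)\|_2=\|M_0\|_2$, which matches the upper bound and completes the identification $\|B^\top(sI-A)^{-1}B\|_{\Hinf}=\|B^\top A^{-1}B\|_2$. The delicate step is the Cauchy--Schwarz split: the weight $1/|\lambda_i|$ must be distributed symmetrically between the $x$- and $y$-factors so that each factor reproduces exactly the quadratic form of $M_0$. A naive triangle-inequality bound would instead give $\|G(j\omega)\|_2\leq\sum_i\|\tilde b_i\|^2/|\lambda_i|=\trace(M_0)$, which is strictly larger than $\|M_0\|_2$ whenever $M_0$ has rank greater than one, and therefore fails to recover the sharp identity.
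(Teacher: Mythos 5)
Your argument is correct, and it is worth noting that the paper does not actually prove this lemma --- it only cites \cite[Appendix A]{liu1998model} --- so your proposal supplies a self-contained derivation where the paper defers to a reference. The chain of steps checks out: the spectral decomposition $A=U\Lambda U^\top$ with $\lambda_i<0$ gives $G(j\omega)=\sum_i \tilde b_i\tilde b_i^\top/(j\omega-\lambda_i)$; the weighted Cauchy--Schwarz split with weights $|j\omega-\lambda_i|^{-1}\le|\lambda_i|^{-1}$ bounds each bilinear form by the quadratic form of $M_0=-B^\top A^{-1}B=B^\top(-A)^{-1}B\succeq 0$, whose largest eigenvalue equals $\|M_0\|_2=\|B^\top A^{-1}B\|_2$; and evaluation at $\omega=0$ shows the bound is attained, so the supremum over $\omega$ equals $\|B^\top A^{-1}B\|_2$. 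Your closing remark correctly identifies the one genuinely delicate point: a termwise triangle-inequality bound only yields $\mathrm{Tr}(M_0)$, which is not sharp beyond rank one, whereas splitting the weight $1/|\lambda_i|$ symmetrically between the two factors reproduces exactly the quadratic form of $M_0$ on each side. The only cosmetic gap is that you implicitly use $\sup_{\|z\|=1}z^{*}M_0z=\|M_0\|_2$, which holds here because $M_0$ is positive semidefinite (as $-A\succ 0$); stating that explicitly would make the argument airtight.
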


\begin{figure*}
\centering
\subfloat[][]{\includegraphics[width=.75\columnwidth]{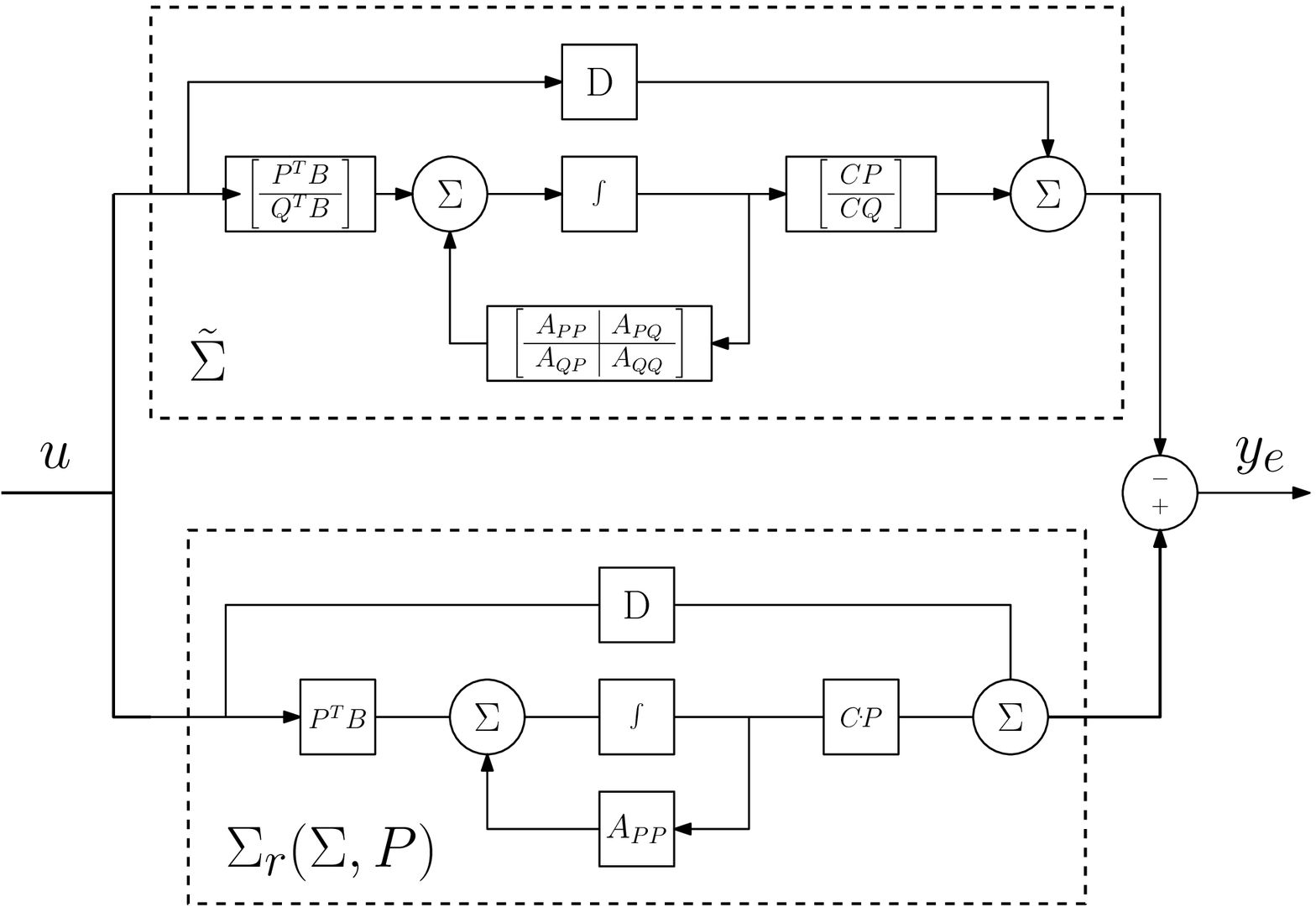}\label{fig: error system block diagram additive}}
\,\,\,\,\,
\subfloat[][]{\includegraphics[width=.75\columnwidth]{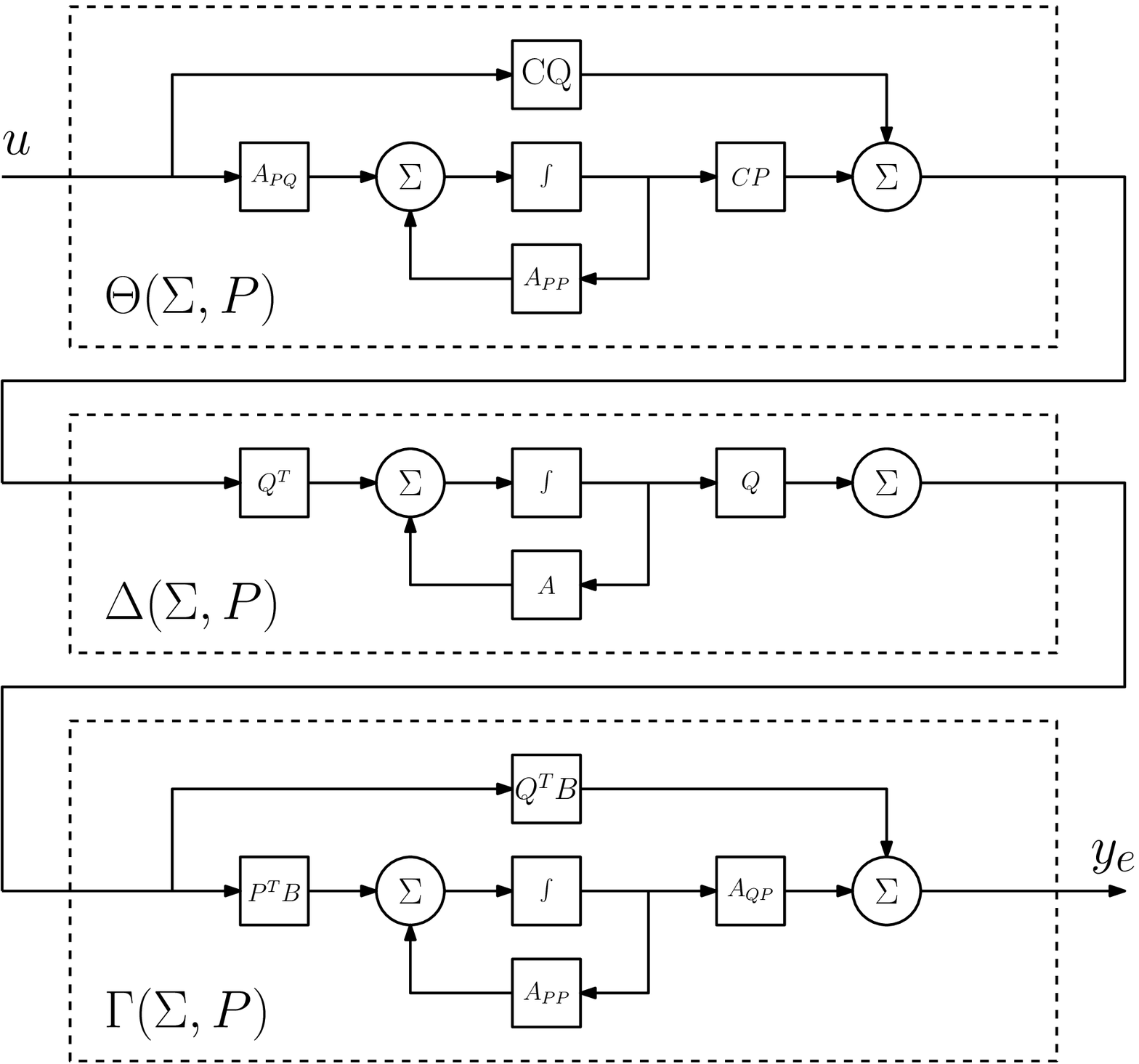}\label{fig: error system block product}}
\caption{The PROM error system block diagram in (a) additive form and (b) product form \eqref{PROM error system}.}\label{fig: error system block diagram}
\end{figure*}

\begin{cor}\label{PROM symmetric error bound} Let $\Sigma:=\left(A,B,C,D\right)$ with $A$
symmetric and Hurwitz, and consider a PROM $\Sigma_{r}(\Sigma,P):=\left(P^\top AP,P^\top B,CP,D\right)$.
Then the $\Hinf$ norm of the error reduction system \eqref{eq: reduction error system}
is bounded as 
$$\| \Sigma_{e}(\Sigma,P)\| _{\Hinf}\leq b(\Sigma,P),$$
where 
\begin{equation}
b(\Sigma,P)=\| \Theta(\Sigma,P)\| _{\Hinf}\| \Gamma(\Sigma,P)\| _{\Hinf}\| Q^\top A^{-1}Q\| _{2}.\label{eq:PROM symmetric error bound equation}
\end{equation}
\end{cor}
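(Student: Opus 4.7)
The plan is to invoke Proposition \ref{PROM error bound} and then specialize the middle factor $\|\Delta(\Sigma,P)\|_{\Hinf}$ to a static matrix norm using Lemma \ref{symmetric w0 Hinf}. The first and third factors on the right-hand side of \eqref{eq:PROM symmetric error bound equation} already match those in \eqref{eq:PROM error bound equation}, so the entire argument reduces to evaluating $\|\Delta(\Sigma,P)\|_{\Hinf}$ in closed form under the symmetry assumption on $A$.

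First, I would recall from Proposition \ref{PROM error bound} that $\|\Sigma_e(\Sigma,P)\|_{\Hinf} \leq \|\Theta(\Sigma,P)\|_{\Hinf}\,\|\Delta(\Sigma,P)\|_{\Hinf}\,\|\Gamma(\Sigma,P)\|_{\Hinf}$. Reading off the realization $\Delta(\Sigma,P) := (A, Q, Q^\top, 0)$ given in \eqref{lti.3}, the transfer function matrix is $\hat{\Delta}(s) = Q^\top (sI_n - A)^{-1} Q$, with no feedthrough term to complicate the expression.

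Second, I would observe that $\hat{\Delta}(s)$ is exactly of the form $B^\top(sI-A)^{-1}B$ covered by Lemma \ref{symmetric w0 Hinf}, with $B$ taken to be $Q$. The hypotheses of the lemma are satisfied because $A$ is symmetric and Hurwitz by assumption, and the lemma places no restriction on the shape of $B$, so the choice $B = Q \in \mathbb{R}^{n\times(n-r)}$ is admissible. Applying the lemma yields $\|\Delta(\Sigma,P)\|_{\Hinf} = \|Q^\top A^{-1} Q\|_2$. Substituting this identity into the submultiplicative bound from Proposition \ref{PROM error bound} immediately produces \eqref{eq:PROM symmetric error bound equation}.

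I do not expect any serious obstacle: the corollary is essentially a repackaging of Proposition \ref{PROM error bound} in which the purely dynamic $\Hinf$-norm of the middle factor collapses to a static spectral quantity. The only points worth double-checking are that (i) the zero $D$-matrix in $\Delta$ really does allow identification of $\hat{\Delta}(s)$ with $Q^\top(sI-A)^{-1}Q$, and (ii) the Hurwitz assumption on $A$ inherits to $\Delta$, ensuring $\Delta$ is a stable proper system so that its $\Hinf$-norm is well-defined and Lemma \ref{symmetric w0 Hinf} applies.
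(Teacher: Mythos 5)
Your proposal is correct and follows exactly the paper's own argument: apply Proposition \ref{PROM error bound}, identify $\hat{\Delta}(s)=Q^\top(sI-A)^{-1}Q$ from the realization \eqref{lti.3}, and use Lemma \ref{symmetric w0 Hinf} with $B=Q$ to replace $\|\Delta(\Sigma,P)\|_{\Hinf}$ by $\|Q^\top A^{-1}Q\|_2$. The extra sanity checks you note (zero feedthrough, stability inherited from $A$ Hurwitz) are sound and only make the argument more careful than the paper's one-line proof.
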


\begin{proof} Applying Lemma \ref{symmetric w0 Hinf}, we get $\|\Delta(\Sigma,P)\| _{\Hinf}=\| Q^\top A^{-1}Q\| _{2}$ and substituting it in \eqref{eq:PROM error bound equation} we obtain \eqref{eq:PROM symmetric error bound equation}.  $\qed$
\end{proof}

The PROM error bound is the product of the $\Hinf$-norms of the three LTI systems \eqref{lti.1}-\eqref{lti.3} constructing the error system. We observe that with the unitary transformation $\tilde\Sigma:=\left(U^\top AU,U^\top B,CU,D\right)$ with $U=\begin{bmatrix}P & Q \end{bmatrix}$, the full-order system $\tilde \Sigma$ has a realization
\begin{equation*}
    \tilde{\Sigma}:=\left(\left[\begin{array}{c|c}
A_{PP} & A_{PQ}\\
\hline A_{QP} & A_{QQ}
\end{array}\right],\left[\begin{array}{c}
P^{T}B\\
\hline Q^{T}B
\end{array}\right],\left[\begin{array}{c|c}
CP & CQ\end{array}\right],D\right),
\end{equation*}
where the PROM is 
\begin{equation*}
    \Sigma_{r}(\Sigma,P):=  \left(A_{PP},P^\top B,CP,D\right),
\end{equation*} and the block-diagram of $\Sigma_r-\tilde{\Sigma}$ in additive form is shown in Figure \ref{fig: error system block diagram additive}. 
If $A_{PQ}$, $A_{QP}$, $A_{QQ}$, $CQ$ and $Q^TB$ are all zeros, then $\Sigma_r-\Sigma=0$ (and in this case $\Sigma:=(A,B,C,D)$ is not a minimal realization). If the reduction error is not zero, and each of the three product systems implicitly captures the contribution to the reduction error of these parts of $\tilde{\Sigma}$  left out in $\Sigma_{r}$. The map $\Theta(\Sigma,P)$ captures  $A_{PQ}$ and $CQ$, $\Gamma(\Sigma,P)$ captures $A_{QP}$ and $Q^TB$ and $\Delta(\Sigma,P)$ captures $A_{QQ}$ (Figure \ref{fig: error system block product}).

Since there are no closed-form solutions to the optimal IIPROM Problem \ref{optimal IIPROM}, this structure suggests that minimizing the three reduction error contributions can provide good PROMs. As a first step in obtaining a suboptimal solution, we define the following suboptimal IIPROM problem that attempts to minimize the error reduction upper bound derived in Proposition \ref{PROM error bound}.

\begin{prob}[suboptimal IIPROM]\label{optimal IIPROM bound}
Consider a stable proper system of order $n$ with realization $\Sigma:=\left(A,B,C,D\right)$.
 Find $P\in\mathbb{R}^{n\times r}$ with $P^\top P=I_{r}$ such that
the PROM $\Sigma_{r}\left(\Sigma,P\right)$ \eqref{eq:PROM}
minimizes the reduction error bound \eqref{eq:PROM error bound equation}
and is interface-invariant, i.e., 
\begin{align}
\min_{P\in\mathbb{R}^{n\times r}} & b(\Sigma,P)\\
s.t.\,\,\, 
 & P^\top P=I_{r}\nonumber \\
 & C=CPP^\top\nonumber \\
 & B=PP^\top B.\nonumber 
\end{align}
\end{prob}

The following simple example provides a comparison between the solutions for the optimal IIPROM Problem \ref{optimal IIPROM} and the optimal IIPROM bound Problem \ref{optimal IIPROM bound}.

\begin{example}
Consider the SISO system $\Sigma:=\left(A,B,C,D\right)$
where 
\begin{align*}
    A = \begin{bmatrix}-2 & 1 & 0\\
1 & -2 & 1\\
0 & 1 & -1 \end{bmatrix},\, B=\begin{bmatrix}1 \\ 0 \\0 \end{bmatrix},\,C = \begin{bmatrix} 1 & 0 & 0\end{bmatrix},\, D=0.
\end{align*}
The corresponding TF is 
\begin{equation*}
    \hat{\Sigma}=\frac{s^{2}+3s+1}{s^{3}+5s^{2}+6s+1}
\end{equation*}
and $\|\Sigma\| _{\Hinf}=1$.
We observe that all matrices $P\in\mathbb{R}^{3\times2}$
complying with $P^\top P=I_2$, $CPP^\top=C$ and $PP^\top B=B$, can be parameterized by a scalar $\alpha\in\left[-1,1\right]$ in the following form, $$P\left(\alpha\right)=\begin{bmatrix}
1 & 0 \\ 0 & \alpha \\ 0 & \beta(\alpha)
\end{bmatrix},$$ 
where $\beta\left(\alpha\right)=\sqrt{1-\alpha^{2}}$.
All matrices $Q\in\mathbb{R}^{3\times1}$ such that $P^\top Q=0$ are
parametrized by $$Q\left(\alpha\right)=\begin{bmatrix}
0 \ -\beta\left(\alpha\right) & \alpha\end{bmatrix}^\top.$$ All IIPROMs $\Sigma_{r}\left(\Sigma,P\right)$ are then parameterized also by $\alpha$, such that the matrices of the product form system realizations of Theorem \ref{PROM error system} are
\begin{equation*}
    A_{PP}\left(\alpha\right) =\begin{bmatrix} -2 & \alpha \\ \alpha & -\alpha^2-(\beta(\alpha)-\alpha)^2 \end{bmatrix}
\end{equation*}
and
\begin{equation*}
    A_{PQ}\left(\alpha\right)=\begin{bmatrix} -\beta(\alpha) \\ 2\alpha^2+\alpha \beta(\alpha)-1 \end{bmatrix}=A_{QP}(\alpha)^\top.
\end{equation*}

The PROM TF is 
\begin{equation*}
    \hat{\Sigma}_{r}=\frac{s+\gamma^{2}}{s^{2}+\left(\gamma^{2}+2\right)s+2\gamma^{2}-\alpha^{2}},
\end{equation*}
and the product systems \eqref{lti.1}-\eqref{lti.3} TFs are
\begin{equation*}
    \hat{\Theta}=\hat{\Gamma}=\frac{-\beta s+\alpha^{3}-\alpha^{2}\beta+\alpha\beta^{2}-\beta^{3}}{s^{2}+\left(\alpha^{2}+\gamma^{2}+2\right)s+\alpha^{2}+2\gamma^{2}}
\end{equation*}
and
\begin{equation*}
    \hat{\Delta}=\frac{\left(\alpha^{2}+\beta^{2}\right)s^{2}+\left(2\gamma^{2}+2\alpha^{2}+\beta^{2}\right)s+2\gamma^{2}+\alpha^{2})}{s^{3}+5s^{2}+6s+1},
\end{equation*}
where $\gamma\triangleq\sqrt{\alpha^{2}+\left(\alpha-\beta\right)^{2}}$.

The PROM reduction error and reduction error bound are plotted in
Figure \ref{fig:error and bound of the second order system}. It is observed that the solution of the optimal IIPROM problem (Problem \ref{optimal IIPROM}) is  $\min_\alpha\|\Sigma-\Sigma_r\|_\infty=0.03$ obtained for $\alpha^{*}=0.76$. The solution of the optimal IIPROM bound problem (Problem \ref{optimal IIPROM bound}) is $\min_\alpha b(\Sigma,P)=0.039$ obtained for $\alpha^{**}=0.73$. We observe that the optimal IIPROM error is close to the sub-optimal bound.

\begin{figure}
\centering
\includegraphics[width=.65\columnwidth]{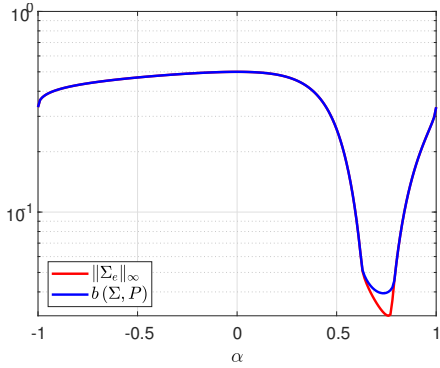}\caption{The IIPROM reduction error and bound of the second order system as a function of the projection parameter $\alpha$.}\label{fig:error and bound of the second order system}
\end{figure}

\end{example}

\subsection{The PROM Sequence Bound}

In the following subsection we present a Lemma showing that any projection from $\R^{n}$ to $\R^{r}$  can be obtained from a sequence of $n-r$ projections, each reducing the dimension by one.  We  denote such projections as \textit{singleton projections}. This sequential projection representation is then utilized to obtain a sequential PROM bound that is useful for obtaining sub-optimal solutions to Problem \ref{optimal IIPROM bound}.

\begin{lem}\label{projection product lemma}Let $P\in\R^{n\times r}$  with
$P^\top P=I_{r}$ be a projection for $r<n$. Then
there exists a sequence $\left\{ P_{\left(k\right)}\right\} _{k=1}^{n-r}$
with $P_{\left(k\right)}\in\R^{n-k+1\times n-k}$, $P_{\left(k\right)}^\top P_{\left(k\right)}=I_{n-k}$
such that $P=\Pi_{k=1}^{n-r}P_{\left(k\right)}$. \end{lem}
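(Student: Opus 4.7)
The plan is to prove Lemma \ref{projection product lemma} by induction on the codimension $m=n-r$, reducing the dimension by exactly one at each step. The basic intuition is that any isometric embedding $P:\mathbb{R}^r\to\mathbb{R}^n$ can first be factored as an isometric embedding $\mathbb{R}^r\to\mathbb{R}^{n-1}$ followed by an isometric embedding $\mathbb{R}^{n-1}\to\mathbb{R}^n$, and then iterating this gives the full factorization.

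For the base case $m=1$, the matrix $P\in\mathbb{R}^{n\times(n-1)}$ is itself a singleton projection and the claim holds with $P_{(1)}=P$. For the inductive step, assume the lemma is proved whenever the codimension equals $m$, and let $P\in\mathbb{R}^{n\times r}$ with $P^\top P=I_r$ satisfy $n-r=m+1$. The first step is to extend the $r$ orthonormal columns of $P$ to an orthonormal set of $n-1$ columns in $\mathbb{R}^n$: since $r\le n-1$, we may append $n-1-r$ unit vectors orthogonal to $\mathrm{range}(P)$ and to each other (for instance by Gram--Schmidt on any completion of $P$ to a basis of $\mathbb{R}^n$ minus one vector). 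Call the resulting matrix $P_{(1)}\in\mathbb{R}^{n\times(n-1)}$; by construction $P_{(1)}^\top P_{(1)}=I_{n-1}$.

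Next I write $P$ in terms of $P_{(1)}$. Because the columns of $P$ are the first $r$ columns of $P_{(1)}$, we have the factorization
\begin{equation*}
P=P_{(1)}P',\qquad P'=\begin{bmatrix}I_r\\ 0_{(n-1-r)\times r}\end{bmatrix}\in\mathbb{R}^{(n-1)\times r},
\end{equation*}
and clearly $P'^\top P'=I_r$. Now $P'$ is an isometric embedding with codimension $(n-1)-r=m$, so by the inductive hypothesis there exist matrices $P'_{(1)},\ldots,P'_{(m)}$ with $P'_{(j)}\in\mathbb{R}^{(n-j)\times(n-j-1)}$, $P'^{\top}_{(j)}P'_{(j)}=I_{n-j-1}$, such that $P'=\prod_{j=1}^{m}P'_{(j)}$. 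Relabeling $P_{(j+1)}:=P'_{(j)}$ for $j=1,\ldots,m$ yields a singleton-projection sequence of length $m+1=n-r$ of the required block dimensions, and $P=\prod_{k=1}^{n-r}P_{(k)}$ as claimed.

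There is no substantial obstacle here; the only point that needs care is bookkeeping the dimensions so that the inductive hypothesis applies cleanly (the $(n-1)\times r$ matrix $P'$ has exactly codimension $m$, matching the hypothesis), and verifying that each $P_{(k)}$ produced has orthonormal columns, which follows immediately either from the initial extension (for $P_{(1)}$) or from the inductive hypothesis (for the remaining $P_{(k)}$). An alternative phrasing using Householder reflections applied to a completion $[P\ Q]$ of $P$ to an orthogonal matrix would yield the same factorization, but the direct induction above is the most concise path.
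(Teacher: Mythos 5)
Your proof is correct and is essentially the same construction as the paper's: both factor $P$ as a first matrix $P_{(1)}\in\R^{n\times(n-1)}$ obtained by appending $n-1-r$ orthonormal vectors from the orthogonal complement of $\mathrm{range}(P)$, followed by trivial coordinate-selection factors of the form $\left[I,\,0\right]^\top$. The only difference is presentational — you organize the argument as an induction on the codimension $n-r$, whereas the paper writes out the whole sequence explicitly and verifies the telescoping product in one step.
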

\begin{proof}Let $P\in\R^{n\times r}$, $P^\top P=I_{r}$,
then there exists $Q\in\R^{n\times n-r}$ with $Q^\top Q=I_{n-r}$ such
that $P^\top Q=0$. Construct $P_{\left(1\right)}=\left[P,q_{1},q_{2},\ldots,q_{n-r-1}\right]$, where $q_{i}$ is the $i$th column of $Q$,  and $P_{\left(k\right)}=\left[I_{n-k},0_{n-k\times1}\right]^\top$ for $k\in\left[2,n-r\right]$. Since $P^\top P=I_{r}$ and $P^\top Q=0$ we have   $P_{\left(1\right)}^\top P_{\left(1\right)}=I_{n-1}$, and for $k\in\left[2,n-r\right]$ it is trivial that  $P_{\left(k\right)}^\top P_{\left(k\right)}=I_{n-k}$. From the construction we get
\begin{align*}
    \left(\prod_{k=1}^{n-r}P_{\left(k\right)}\right)&=P_{\left(1\right)}\left(\prod_{i=1}^{n-r-1}P_{\left(1+i\right)}\right)\\&=P_{\left(1\right)}\left(\prod_{i=1}^{n-r-1}\left[I_{n-i},0_{n-1\times1}\right]^\top\right)\\&=P  .
\end{align*}
$\qed$
\end{proof}

Note that infinite other sequences $P=\Pi_{k=1}^{n-r}\tilde{P}_{\left(k\right)}$ can be produced by the transformations $\tilde{P}_{\left(1\right)}=P_{\left(1\right)}U_{\left(1\right)}$,
$\tilde{P}_{\left(k\right)}=U_{\left(k-1\right)}^\top P_{\left(k\right)}U_{\left(k\right)}$ for $k\in\left[2,n-r-1\right]$, and $\tilde{P}_{\left(n-r\right)}=U_{\left(n-r-1\right)}^\top P_{\left(n-r\right)}$, where $U_{\left(k\right)}\in \R^{n-k\times n-k}$ is an orthogonal matrix.

By expressing a PROM as a sequence of singleton projections, we obtain
the following PROM sequence bound.

\begin{prop}[Singleton PROM sequence bound]Let $\Sigma$ be an LTI
system with realization $\left(A,B,C,D\right)$ with $A$ Hurwitz,
and consider the PROM, $\Sigma_{r}:=(P^\top A P,P^\top B,CP,D)$,
and let $\left\{ P_{\left(k\right)}\right\} _{k=1}^{n-r}$ be a sequence with
$P_{\left(k\right)}\in\R^{n-k+1\times n-k}$, $P_{\left(k\right)}^\top P_{\left(k\right)}=I_{n-k}$
such that $P=\Pi_{k=1}^{n-r}P_{\left(k\right)}$. Then
the $\Hinf$ norm of the error reduction system $\Sigma_{e}$ \eqref{eq: reduction error system} is bounded by
\begin{equation}
\| \Sigma_{e}\| _{\Hinf}\leq\sum_{k=1}^{n-1}b\left(\Sigma_{\left(k-1\right)},P_{\left(k\right)}\right),\label{eq:Singleton PROM sequence bound}
\end{equation}
with $b(\Sigma_{(k-1)},P_{(k)})$
given in \eqref{eq:PROM error bound equation}, and 
\begin{equation}
\Sigma_{\left(k\right)}:=\left(P_{\left(k\right)}^\top A_{\left(k-1\right)}P_{\left(k\right)},P_{\left(k\right)}^\top B_{\left(k-1\right)},C_{\left(k-1\right)}P_{\left(k\right)},D\right)
\end{equation}
with $\Sigma_{\left(0\right)}:=\left(A,B,C,D\right)$.
\end{prop}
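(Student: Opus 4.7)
The plan is to prove the bound by a telescoping decomposition of the reduction error into $n-r$ singleton-projection errors, followed by the triangle inequality for the $\Hinf$-norm and an application of Proposition \ref{PROM error bound} term by term.

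First I would establish the iterative identity $\Sigma_{(n-r)} = \Sigma_{r}$. By construction, $\Sigma_{(k)}$ is the PROM of $\Sigma_{(k-1)}$ under the singleton projection $P_{(k)}$, so unrolling the recursion gives
\begin{align*}
A_{(n-r)} &= P_{(n-r)}^{\top}\cdots P_{(1)}^{\top}\,A\,P_{(1)}\cdots P_{(n-r)} = P^{\top}AP,\\
B_{(n-r)} &= P^{\top}B,\quad C_{(n-r)} = CP,\quad D_{(n-r)} = D,
\end{align*}
by Lemma \ref{projection product lemma}. Hence $\Sigma_{(n-r)}=\Sigma_{r}(\Sigma,P)$.

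Next I would write the full error as a telescoping sum of single-step reduction errors,
\begin{equation*}
\Sigma - \Sigma_{r} \;=\; \Sigma_{(0)} - \Sigma_{(n-r)} \;=\; \sum_{k=1}^{n-r}\bigl(\Sigma_{(k-1)} - \Sigma_{(k)}\bigr),
\end{equation*}
so each summand is itself a one-step PROM reduction error of $\Sigma_{(k-1)}$ with singleton projection $P_{(k)}$. Applying Proposition \ref{PROM error bound} to each step yields $\|\Sigma_{(k-1)}-\Sigma_{(k)}\|_{\Hinf}\leq b(\Sigma_{(k-1)},P_{(k)})$. Combining this with the triangle inequality for the $\Hinf$-norm gives
\begin{equation*}
\|\Sigma_{e}\|_{\Hinf} \;\leq\; \sum_{k=1}^{n-r}\|\Sigma_{(k-1)}-\Sigma_{(k)}\|_{\Hinf} \;\leq\; \sum_{k=1}^{n-r} b(\Sigma_{(k-1)},P_{(k)}),
\end{equation*}
which is the claimed bound (correcting the displayed upper limit $n-1$ to the matching $n-r$ from Lemma \ref{projection product lemma}).

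The main obstacle I anticipate is the Hurwitz hypothesis in Proposition \ref{PROM error bound}: invoking it at each step requires that every intermediate $A_{(k-1)} = P_{(k-1)}^{\top}\cdots P_{(1)}^{\top} A P_{(1)}\cdots P_{(k-1)}$ is Hurwitz, and this is not automatic for orthogonal compressions of a general Hurwitz matrix. For the target application of this paper, where $A$ is symmetric (as in the consensus setting that motivates Corollary \ref{PROM symmetric error bound}), this is immediate because every $P^{\top}AP$ inherits symmetry and the spectrum-interlacing property, so Hurwitzness is preserved at every step; in the general case I would either add an explicit hypothesis that each $A_{(k)}$ be Hurwitz, or restrict the sequence $\{P_{(k)}\}$ (using the freedom described after Lemma \ref{projection product lemma} via the unitary degrees of freedom $U_{(k)}$) to one that preserves stability at every intermediate stage.
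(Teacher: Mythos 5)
Your proof follows essentially the same route as the paper's: the telescoping decomposition $\Sigma_r-\Sigma=\sum_{k=1}^{n-r}(\Sigma_{(k)}-\Sigma_{(k-1)})$, the triangle inequality for the $\Hinf$-norm, and a per-step application of Proposition \ref{PROM error bound}. Your two side observations are both well taken and go beyond what the paper supplies: the displayed upper limit should indeed be $n-r$ rather than $n-1$, and the Hurwitzness of each intermediate $A_{(k)}$ is genuinely not automatic for a general Hurwitz $A$ (the paper invokes the per-step bound without checking this hypothesis), though it does hold in the symmetric case that the paper's applications target.
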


\begin{proof} We express $\Sigma_{e}\left(s\right)=\Sigma_{r}-\Sigma$
as the telescoping sum $\sum_{k=1}^{n-r}\left(\Sigma_{\left(k\right)}-\Sigma_{\left(k-1\right)}\right)$
with $\Sigma_{\left(0\right)}=\Sigma$ and $\Sigma_{\left(n-r\right)}=\Sigma_{r}$,
such that
\begin{align*}
\| \Sigma_{e}\left(s\right)\| _{\Hinf} & =\| \Sigma_{r}-\Sigma\| _{\Hinf}\\
 & =\| \sum_{k=1}^{n-r}\left(\Sigma_{\left(k\right)}-\Sigma_{\left(k-1\right)}\right)\| _{\Hinf}
\end{align*}
and from the triangle inequality we get
\[
\| \Sigma_{e}\left(s\right)\| _{\Hinf}\leq\sum_{k=1}^{n-r}\| \Sigma_{\left(k\right)}-\Sigma_{\left(k-1\right)}\| _{\Hinf}.
\]

The system $\Sigma_{\left(k\right)}$ has realization 
\begin{align*}
\Sigma_{\left(k\right)} & :=\left(A_{\left(k\right)},B_{\left(k\right)},C_{\left(k\right)},D\right)\\
 & =\left(P_{\left(k\right)}^\top A_{\left(k-1\right)}P_{\left(k\right)},P_{\left(k\right)}^\top B_{\left(k-1\right)},C_{\left(k-1\right)}P_{\left(k\right)},D\right),
\end{align*}
which is an IIPROM of $\Sigma_{\left(k-1\right)}$, therefore, from
Theorem~\ref{PROM error bound},
\[
\| \Sigma_{\left(k\right)}-\Sigma_{\left(k-1\right)}\| _{\Hinf}\leq b\left(\Sigma_{\left(k-1\right)},P_{\left(k\right)}\right),
\]
and we obtain that 
\[
\| \Sigma_{e}\left(s\right)\| _{\Hinf}\leq\sum_{k=1}^{n-r}b\left(\Sigma_{\left(k-1\right)},P_{\left(k\right)}\right).
\]
$\qed$
 \end{proof}

In the following section, we will utilize graph contractions for obtaining suboptimal solutions of Problem \ref{optimal IIPROM bound} for multi-agent systems, and therefore, also Problem \ref{optimal IIPROM}.

\section{Model Reduction of Multi-Agent Systems by Graph Contractions}\label{sec:Model-Reduction-of}

Multi-agent systems may be of extremely large scale, and designing and implementing full order controllers for such systems is not feasible without applying model reduction on the design model or the full-order controller. The general statement of Problem \ref{optimal IIPROM bound} does not suggest any constructive way to find the optimal PROM bound for MAS. However, it is expected that an optimal solution will have some functional dependency on the MAS structure. Vertex partitions have been widely used in graph theory, e.g., for graph clustering \cite{schaeffer2007graph}
and in the study of network communities \cite{newman2004finding}. Vertex partitions have been also used for constructing projection-based
model reductions of multi-agent systems as the consensus protocol
\cite{monshizadeh2014projection} and bidirectional networks \cite{ishizaki2014model}. 

It has been observed in these previous studies that partition-based PROMs maintain an MAS structue, i.e., the PROM $\Sigma_{r}(\Sigma,P)$ is an MAS \eqref{eq:controlled multi-agent system} defined over a reduced order graph $\G_{r}$, $\Sigma\left(\G_{r},\mathcal{U},\Y\right)$.

In this work we introduce the notion of edge-induced PROMs.  These are PROMs which are constructed over edge-induced partitions  of the graph. This graph-based model reduction method allows us to derive sub-optimal but efficient IIPROMs of MAS. These algorithms give good in practice  results as demonstrated in Section \ref{sec:Case-Studies};  however, an  analytic result quantifying  their sub-optimality is yet to be derived.
We first define several combinatorial graph operations that will be
used in this section.

An\textit{ r-partition}, $\pi$, of a vertex set $\V$ is the partition
$\left\{ C_{i}\right\} _{i=1}^{r}$ of $\V$ to $r$ cells such that
$\cup_{i=1}^{r}C_{i}=\V$ and $\left|C_{i}\cap C_{j}\right|=0$ for
$i\neq j$. An $r$-partition $\pi=\left\{ C_{i}\right\} _{i=1}^{r}$ is $\mathcal{\ensuremath{I}}$-invariant
for $\mathcal{\ensuremath{I}}\subseteq\V$ if $\forall i\in\left[1,r\right]$, one has $\left|C_{i}\cap\mathcal{\ensuremath{I}}\right|\leq1$, i.e., no partition
cell contains more than one node in $\mathcal{\ensuremath{I}}$.
The $r$-partition is \emph{strongly} $\mathcal{\ensuremath{I}}$-invariant if all partition
cells containing nodes in $\mathcal{\ensuremath{I}}$ are singletons, i.e., $\left|C_{i}\right|=1$ whenever $C_i \subset \mathcal{I}$.  
We denote the set of all strongly $\mathcal{\ensuremath{I}}$-invariant  $r$-partitions as $S_{r}\left(\V\right)$. Figure \ref{fig:An interface invariant partition} illustrates these definitions.
\begin{figure}[b]
\centering
\subfloat[][]{\includegraphics[width=1.0in]{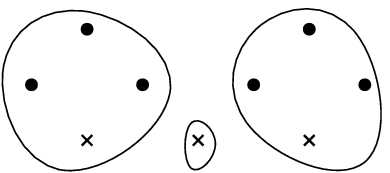}}
\hfill
\subfloat[][]{\includegraphics[width=1.0in]{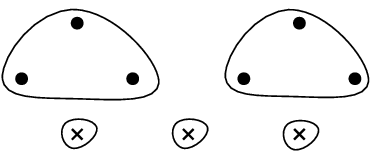}}
\hfill
\subfloat[][]{\includegraphics[width=1.0in]{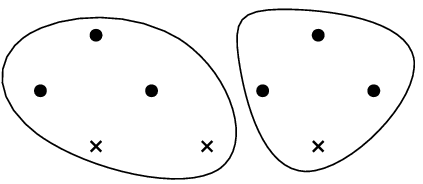}}
\caption{A set of $9$ nodes, with subset $\mathcal{\ensuremath{I}}$  of three  nodes (marked with x), (a) is an $\mathcal{\ensuremath{I}}$-invariant  $3$-partition, (b) is a strongly $\mathcal{\ensuremath{I}}$-invariant  $5$-partition, and (c) is a $2$-partition which is not $\mathcal{\ensuremath{I}}$-invariant.
\label{fig:An interface invariant partition}}
\end{figure}

Given a graph $\graph$, and an $n-r$ edge subset $\E_{S}\subseteq\E$,
an \emph{edge-induced partition} $\pi(\E_{S})$ is an $r$-partition of $\V$ constructed as follows \cite{leiter2021edge}: (i) A graph $\G(\V,\E_{S})$ is created from the vertices of $\G$ and the edge-subset $\E_{S}$, (ii) the connected components of $\G(\V,\E_{S})$ are found, (iii) the vertices of each component is registered as a partition cell, and the set of all components cells constitutes the partition $\pi(\E_{S})$ of $\V$ (Figure \ref{fig:edge partition}).

Given an $r$-partition $\pi=\left\{ C_{i}\right\} _{i=1}^{r}$, we
define $M\left(\pi\right)\in\mathbb{\mathbb{R}}^{n\times r},$ the
\emph{partition characteristic matrix} (PCM) with entries $\left[M\left(\pi\right)\right]_{ij}=1$
if $i\in C_{j}$, and $0$ otherwise. The corresponding \emph{partition
projection matrix} (PPM) is $P\left(\pi\right)\triangleq M\left(\pi\right)D^{-\frac{1}{2}}\left(\pi\right)$
where $D\left(\pi\right)\triangleq M^\top\left(\pi\right)M\left(\pi\right)$

For multi-agent systems we restrict the projection to be PPMs, and
the partition IIPROM bound problem is introduced:

\begin{figure}
\centering
\subfloat[][]{\includegraphics[width=.8in]{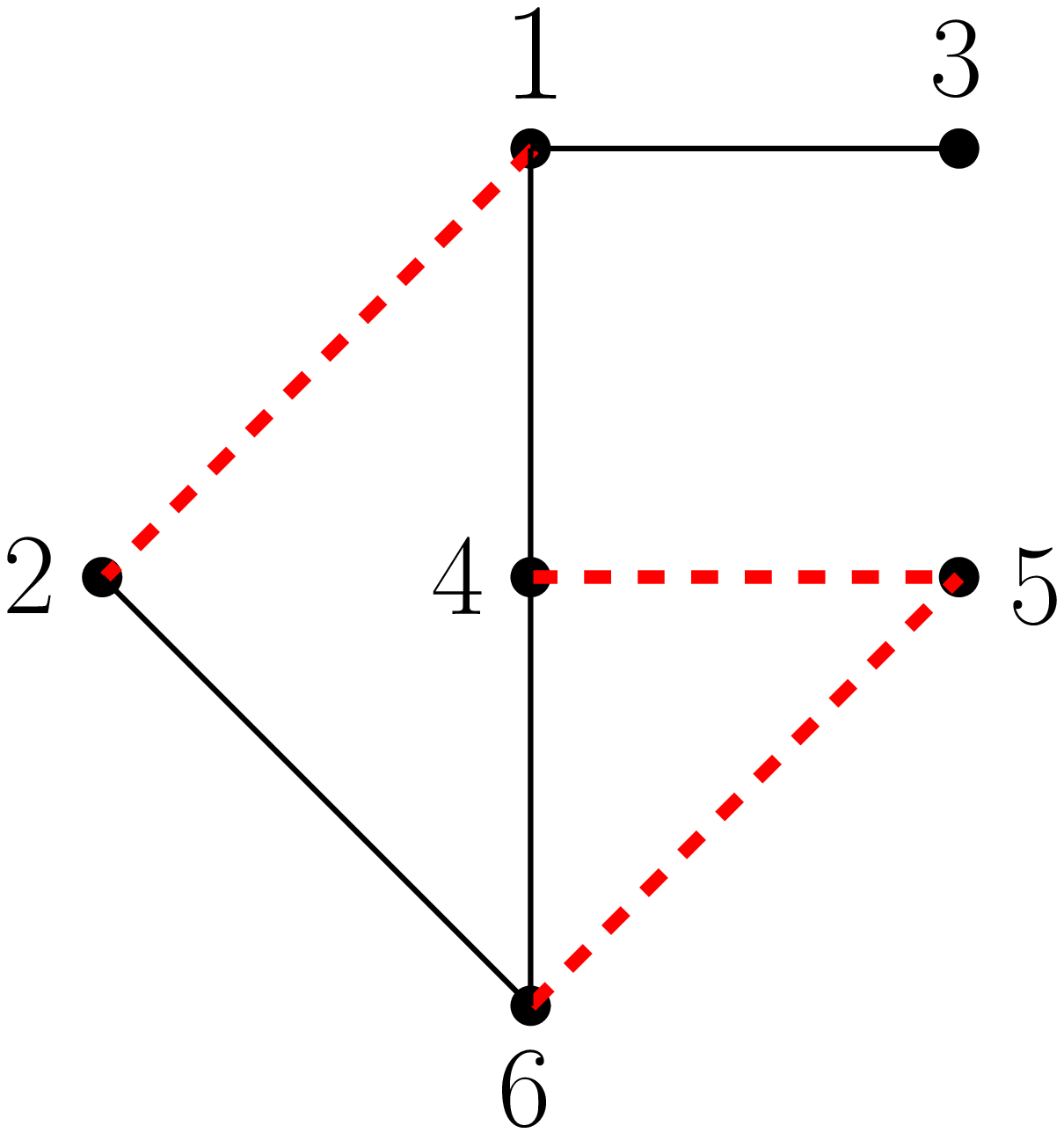}}\hspace{1cm}
\subfloat[][]{\includegraphics[width=.8in]{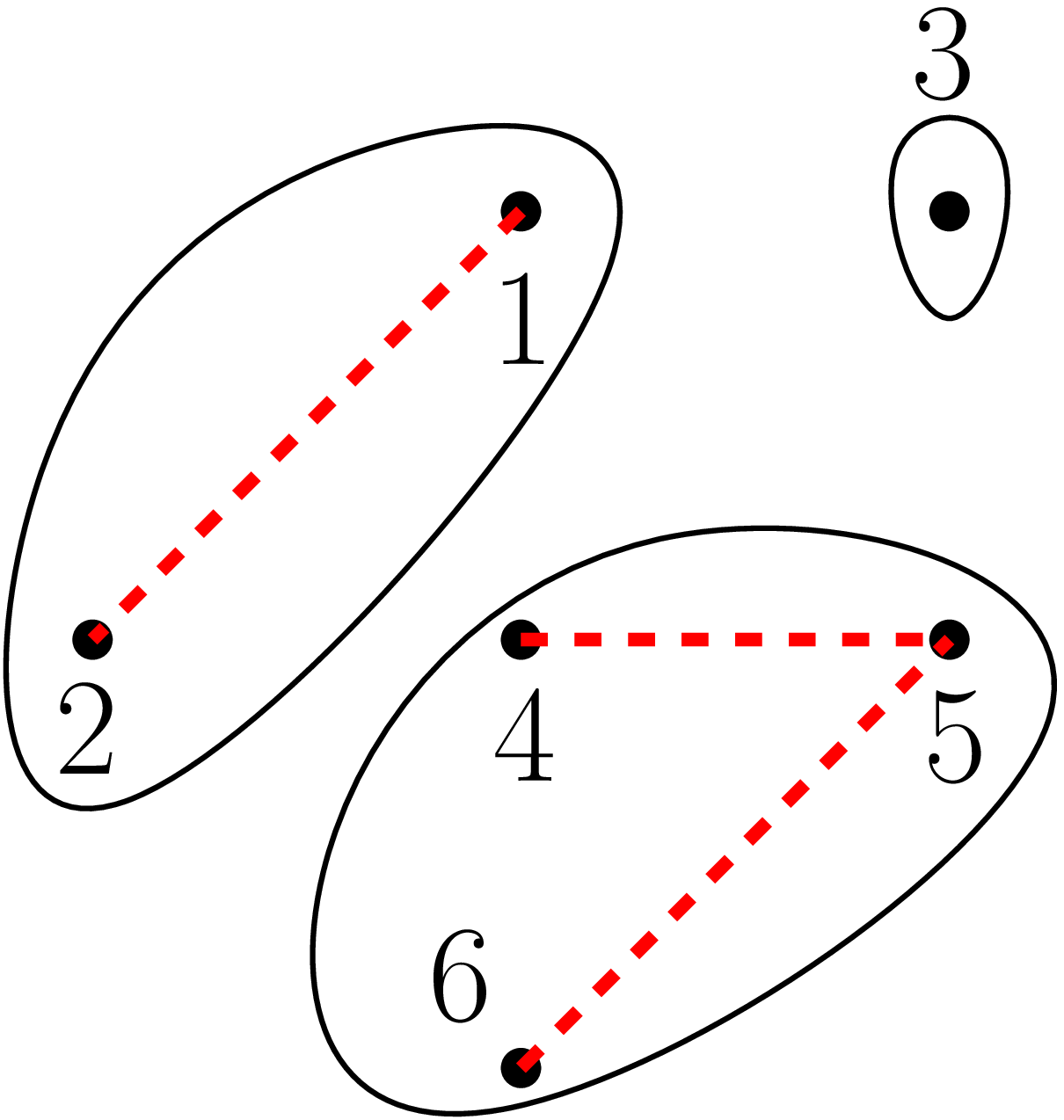}}
\caption{An edge-induced partition of a graph of order $r$, (a) a graph $\graph$ with a selected edge subset $\E_S=\{\{1,2\}, \{4,5\}, \{5,6\}\}$ (dashed red), (b) the graph $\G(\V,\E_S)$ and its connected components inducing the partition $\pi(\E_S)=\{\{1,2\},\{3\},\{4,5,6\}\}$ on $\V(\G)$.
\label{fig:edge partition}}
\end{figure}

\begin{prob}[optimal partition IIPROM bound]\label{optimal partition IIPROM bound}Consider
a stable MAS $\Sigma\left(\G,\mathcal{U},\Y\right)$ with $n$ agents,
each with local state of dimension $d_{x}$, and an interface set
$\mathcal{I}=\U\cup\Y$. Find an $\mathcal{I}$-invariant $r$-partition
$\pi$ such that the PROM $\Sigma_{r}\left(\Sigma,P\left(\pi\right)\right)$
minimizes the reduction error bound $b\left(\Sigma\left(\G,\mathcal{U},\Y\right),I_{d_{x}}\otimes P\left(\pi\right)\right)$ given in 
\eqref{eq:PROM symmetric error bound equation}.\end{prob}

Finding a solution to Problem \ref{optimal partition IIPROM bound}
may be numerically intractable for a moderate number of nodes, as
the number of $r$-partitions is the Stirling number of the second
kind, $$S\left(n,r\right)=\sum\limits _{k=1}^{r}\left(-1\right)^{r-k}\frac{k^{n}}{k!\left(r-k\right)!},$$ which for $r\ll n$ is asymptotically $S\left(n,r\right)\sim\frac{r^{n}}{r!}$
\cite[p.18]{wilf1994generatingfunctionology}. 

Given a subset of edges $\E_{c}\subseteq\E$, an \emph{edge-induced partition} $\pi\left( \E_{c} \right)$ can be constructed as described above. Here we utilize the edge-induced partition to derive a greedy-edge IIPROM bound (GEIB) algorithm (Algorithm \ref{greedy-edge-IIPROM}). The input to the algorithm is an MAS $\Sigma\left(\G,\U,\Y\right)$ \eqref{eq:controlled multi-agent system}, the required reduction order $r$ and a subset of candidate edges $\E_{c}\subseteq\E$ (assuming there are more than $n-r$ edges in $\E_{c}$). The first step of the algorithm is to check if each of the edges is strongly $\mathcal{I}$-invariant by examining if both end nodes of an edge are not in  $\mathcal{I}$. If an edge is found to be strongly $\mathcal{I}$-invariant, a PPM in constructed with its induced edge partition and the PROM error bound \eqref{PROM error bound} is calculated. The PPM of the induced-edge partition of the $n-r$ edges with lowest bound is then used to obtain a PROM and serves as a suboptimal solution to Problem \ref{optimal partition IIPROM bound}. 

\begin{center}
\begin{algorithm}[h]
\caption{Greedy-edge IIPROM bound Algorithm} \label{greedy-edge-IIPROM}

\textbf{Input: }An MAS $\Sigma\left(\G,\U,\Y\right)$ of order $n$
with realization $\Sigma:=\left(A\left(\G\right),B\left(\G,\mathcal{U}\right),C\left(\G,\Y\right),D\left(\U,\Y\right)\right)$,
interface set $\mathcal{I}=\U\cup\Y$, edge subset $\E_{c}\subseteq\E$,
reduction order $r$.
\begin{enumerate}
\item \textbf{For }each $\left\{ u,v\right\} \in\E_{c}$
\begin{itemize}
\item \textbf{if} $u\in\mathcal{I}$ or $v\in\mathcal{I}$, skip to next edge,

\item \textbf{else} construct the edge-induced partition $\pi\left( \left\{ u,v\right\} \right)$ and its PPM $P\left(\pi\right)$, and calculate  $b\left(\Sigma,P\left(\pi\right)\right).$
\end{itemize}
\item \textbf{Find }$n-r$ edges $\E^{*}\subseteq\E_{c}$ with the lowest bound values.
\item \textbf{Construct }the edge-induced $r$-partition $\pi\left(\E^{*}\right)$.
\end{enumerate}
\textbf{Output:} $\Sigma_{r}\left(\Sigma,P\left(\pi\left(\E^{*}\right)\right)\right)$
\eqref{eq:PROM}

\textbf{Notation: $\Sigma_{r}=GEIB\left(\Sigma\left(\G,\U,\Y\right),\E_{c},r\right)$}%
\end{algorithm}
\par\end{center}

The greedy-edge IIPROM bound algorithm does not specify the method to choose the edge subset $\E_{c}$. Trees are the building blocks of any connected graph. A basic graph-theoretic principle is that a spanning tree of a connected graph $\graph$ of order $n$ is a subgraph  with a minimal set of $n-1$ edges connecting all vertices. Furthermore, trees and cycle-completing edges are strongly related to the performance of networked systems \cite{Zelazo2011}. With this intuition, we derive the tree-based IIPROM bound (TBIB) algorithm (Algorithm \ref{Tree-based IIPROM bound Algorithm}), where given an MAS $\Sigma\left(\G,\U,\Y\right)$ \eqref{eq:controlled multi-agent system},  a spanning-tree is found. The edges of the tree are then used as the subset $\E_{c}$ when applying the GEIB algorithm.

\begin{center}
\begin{algorithm}[h]
\caption{Tree-based IIPROM bound Algorithm}
\label{Tree-based IIPROM bound Algorithm}
\textbf{Input: }An MAS $\Sigma\left(\G,\U,\Y\right)$ of order $n$
with realization $\Sigma:=\left(A\left(\G\right),B\left(\G,\mathcal{U}\right),C\left(\G,\Y\right),D\left(\U,\Y\right)\right)$,
interface set $\mathcal{I}=\U\cup\Y$, reduction order $r$.
\begin{enumerate}
\item \textbf{Find} a spanning tree $\T$ of $\G$.
\item \textbf{Perform} greedy edge algorithm $$\Sigma_{r}=GEIB\left(\Sigma\left(\G,\U,\Y\right),\E\left(\T\right),r\right).$$
\end{enumerate}
\textbf{Output:} $\Sigma_{r}$

\textbf{Notation: $\Sigma_{r}=TBIB\left(\Sigma\left(\G,\U,\Y\right),r\right)$}%

\end{algorithm}
\par\end{center}

We have derived the TBIB algorithm utilizing the PROM error bound as a general framework for model reduction of MAS. In the following case study section, we examine the performance of the algorithm, and it will be shown to provide excellent results for the reduction of large-scale MAS.

\section{Case Studies \label{sec:Case-Studies}}

In this section we present some numerical examples illustrating the
results of this work. 

The Laplacian matrix plays a key role in networked an multi-agent systems. For a multi-graph, i.e., a graph that may include duplicate edges and self loops, the Laplacian matrix $L\left(\G\right)\in\R^{\left|\V\right|\times\left|\V\right|}$
is defined as \cite{Godsil2001}
\begin{equation}
\left[L(\mathcal{G})\right]_{uv}=\begin{cases}
\sum_{v\in\left\{ \N_{u}\cup u\right\} }w\left(\left\{ u,v\right\} \right) & u=v\\
-w\left(\left\{ u,v\right\} \right) & u\sim v\\
0 & o.w.
\end{cases},
\end{equation}
where $w(\{u,v\})$ denotes the weight of the edge $\{u,v\}$.

As a test case model of an interface
MAS of the form (\ref{eq:controlled multi-agent system}), we consider the \emph{Laplacian controlled consensus
model (LCC)} \eqref{eq:controlled consensus protocol}, a generalisation of the Laplacian Consensus protocol $\dot{x}=-L\left(\G\right)x$,  a benchmark model of an uncontrolled multi-agent system 
\cite{mesbahi2010graph},
\begin{equation}
\begin{cases}
\dot{x} & =-L\left(\G\right)x+B\left(\mathcal{U}\right)P\\
y & =C\left(\mathcal{Y}\right)x+D\left(\U,\Y\right)u
\end{cases}.\label{eq:controlled consensus protocol}
\end{equation}
In the following subsections we will demonstrate the  effectiveness of the graph-based model reduction Algorithms derived in Section \ref{sec:Model-Reduction-of} for the reduction of small and large-scale LCC models.

\subsection{IIPROM of a Small-Scale Lapalcian Consensus Model}
\begin{figure*}
\centering
\subfloat[][]{\includegraphics[width=2.25in]{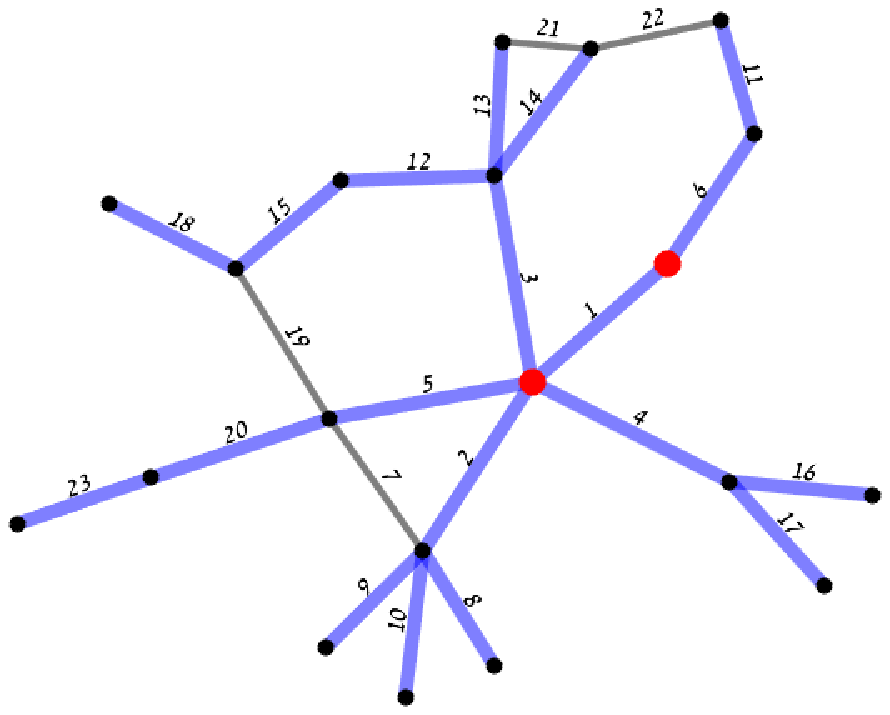}\label{fig:The-Laplacian-consensus case study 1}}
\subfloat[][]{\includegraphics[width=2.25in]{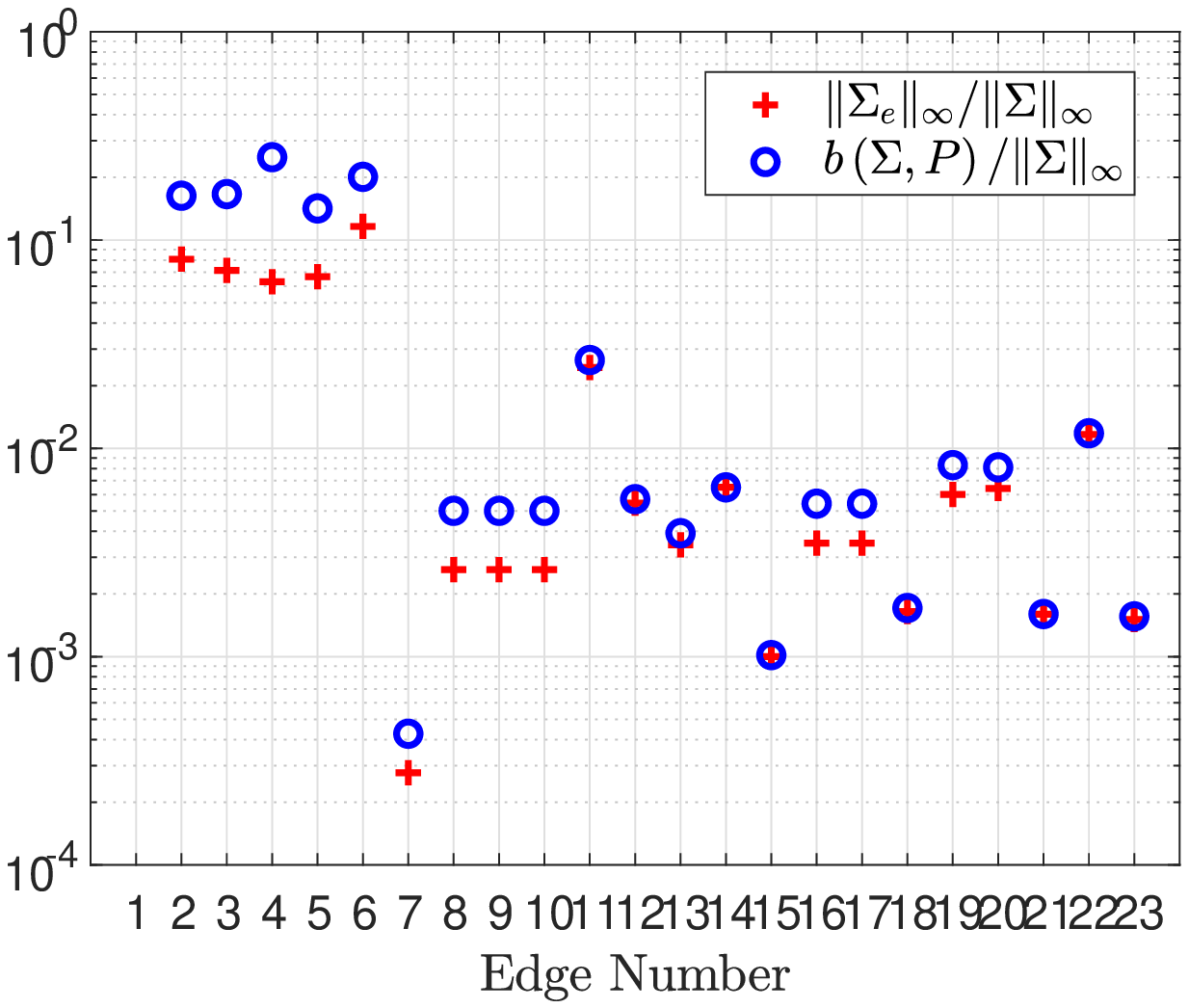}\label{fig:The-singleton iiprom error case study 1}}
\subfloat[][]{\includegraphics[width=2.25in]{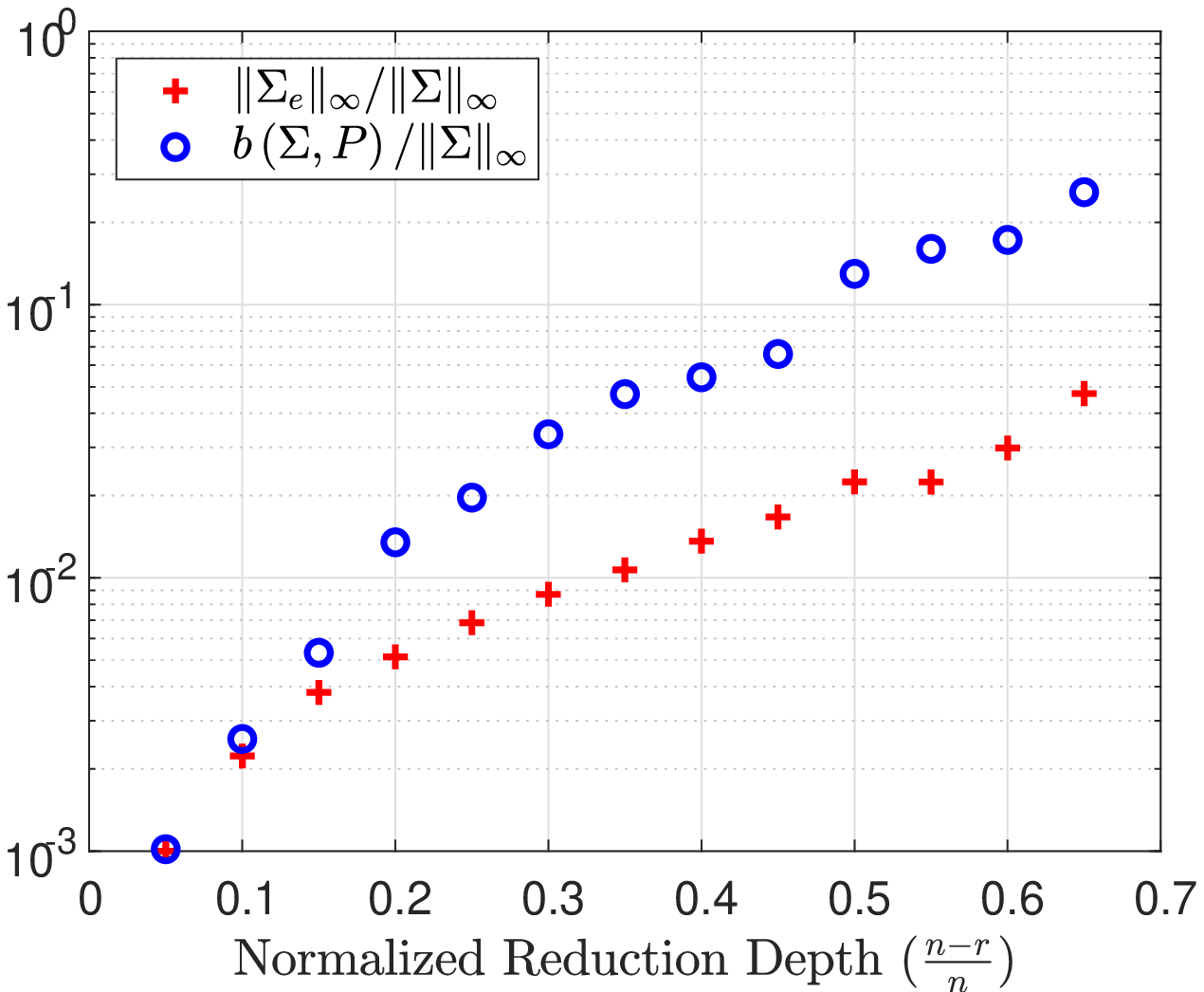}\label{fig:The-Laplacian-consensus tree case study error}}
\caption{A small-scale edge-based IIPROM case study (a) The graph associated to the Laplacian consensus in this example, interface nodes marked red, labels on edges indicate the edge numbers, a spanning tree is highlighted in blue and (b) The reduction error and reduction error bound of each edge-based IIPROM of the Laplacian consensus
associated to the graph in Figure \ref{fig:The-Laplacian-consensus case study 1}. Notice that no values are assigned to edge $1$ since it does not induce an interface invariant PROM. (c) The reduction error and reduction error bound resulting from applying the tree-based  IIPROM  bound  algorithm to the the Laplacian consensus
associated to the graph.}
\end{figure*}

As a first case study we consider the reduction of an LCC model (\ref{eq:controlled consensus protocol})
associated to a small-scale graph $\G$ with $20$ nodes and $23$ edges with
$2$ interface nodes (Figure \ref{fig:The-Laplacian-consensus case study 1}).
We perform an edge-based IIPROM for each of the $\mathcal{\ensuremath{I}}$-invariant singleton partitions induced by the edges of $\G$. In this case, only the partitions induced by edge $1$ are not $\mathcal{\ensuremath{I}}$-invariant since both its end nodes are interface nodes. 
For each of these IIPROMs, the $\Hinf$-norm of the error system \eqref{PROM error system} and the $\Hinf$ error bound \eqref{PROM error bound} are calculated and normalized by $\| \Sigma\| _{\Hinf}$ (Figure \ref{fig:The-singleton iiprom error case study 1}). 
We observe that the choice of edge has a great effect on the reduction error magnitude, some edges produce reduction errors smaller than $1\%$ of $\| \Sigma\| _{\Hinf}$, while other edges induce IIPROM errors that are $10\%$ of $\| \Sigma\| _{\Hinf}$. It is also observed that the error bound is relatively tight for small reduction errors, while for larger errors the bound may differ more than an order of magnitude from the error.  

We then demonstrate the tree-based IIPROM bound algorithm (Algorithm \ref{Tree-based IIPROM bound Algorithm}) on the LCC. The spanning tree edges are found and Algorithm \ref{Tree-based IIPROM bound Algorithm} is then performed for $r\in\left[10,19\right]$. Figure \ref{fig:The-Laplacian-consensus tree case study error}
plots the reduction error $\| \Sigma_{e}\| _{\Hinf}$,
and the error bound $b_{r}\left(\G,U_{r}\right)$ \eqref{eq:PROM error bound equation},
as a function of the normalized reduction depth $\frac{n-r}{n}$. We observe that the log reduction
error has a quasi-linear trend as a function of $\frac{n-r}{n}$, and that the bound is tight for low reduction depths and differs as the reduction increases; however, the bound follows the same trend as the error, therefore, minimizing the bound is consistent with minimizing the error in this case.


\subsection{IIPROM of A Large-Scale Small-World Laplacian Consensus Model}
As a large-scale case study, a small-world graph is created with the Watts-Strogatz random rewiring procedure with $k=5$ and $\beta=0.15$ \cite{watts1998collective} starting from a $5$-regular graph of order $100$ with $5$ interface nodes $\U=\Y=\{1,\ldots,5\}$ (Figure \ref{fig:small world case study}). An LCC is constructed over this graph and the tree-based IIPROM bound algorithm (Algorithm \ref{Tree-based IIPROM bound Algorithm}) is then performed. Figure \ref{fig: small world PROM error} plots the reduction error $\| \Sigma_{e}\| _{\Hinf}$,
and the error bound $b_{r}\left(\G,U_{r}\right)$ \eqref{eq:PROM error bound equation}  (normalized by $\| \Sigma\| _{\Hinf}$),
as a function of the normalized reduction depth $\frac{n-r}{n}$. 
As a comparison to the reduction error and bound  results, we calculate the empirical mean-IIPROM $\mu_{P}\left(\Sigma\right)$, the mean reduction error $\| \Sigma_{e}\| _{\Hinf}$ of $N=50$ randomly selected IIPROMs, and the empirical mean edge-based IIPROM $\mu_{\varepsilon}\left(\Sigma\right)$, the mean reduction error $\| \Sigma_{e}\| _{\Hinf}$ of $N=50$ randomly selected edge-based PPM IIPROMs. 

We observe that bound is tight for the entire reduction depth range. Furthermore, the reduction with the proposed tree-based method is several orders of magnitude lower than the empirical mean-IIPROM, and for lower reduction depth is significantly better than the empirical mean edge-based IIPROM. As expected, for high reduction depth, the tree-based method converges to the empirical mean edge-based IIPROM.    

\begin{figure}
\centering
\includegraphics[width=.55\columnwidth]{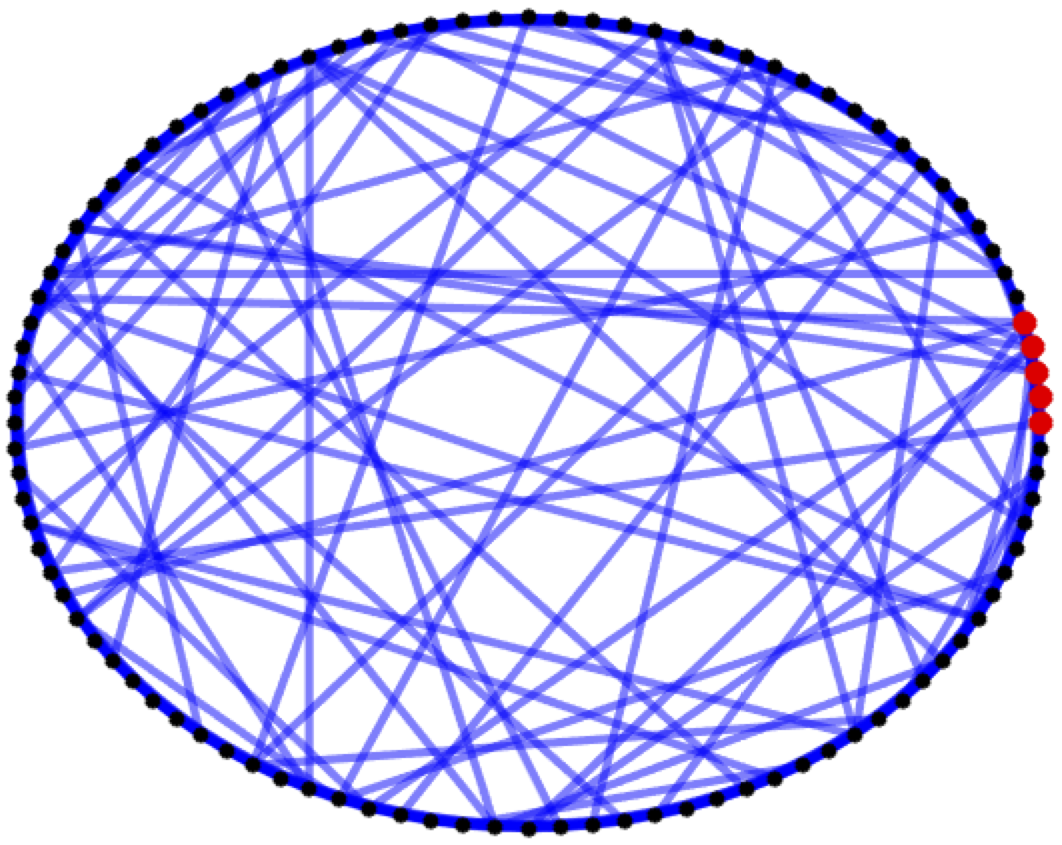}\caption{The Laplacian consensus in this example is associated to a small-world graph with $5$ interface nodes (marked red). }\label{fig:small world case study}
\end{figure}

\begin{figure}
\centering
\includegraphics[width=.55\columnwidth]{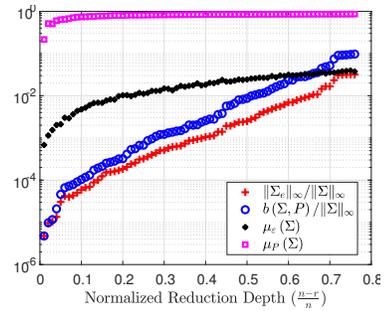}\caption{The reduction error and reduction error bound of the Laplacian consensus
associated to the graph in Figure \ref{fig:small world case study}. }\label{fig: small world PROM error}
\end{figure}

\section{Conclusions\label{sec:Conclusions}}

This work derived a unique product form of the  error system of orthogonal projection-based reduced models. This product form is then used to derive an $\Hinf$ error bound for the PROM error system. A suboptimal bound optimization solution for multi-agent systems is obtained with a graph-based spanning tree algorithm.   
Applying this algorithm on a Laplacian consensus model constructed with a large-scale small-world  network, demonstrates the utility of the method to large scale multi-agent systems.
In the examined case studies, the bound obtained with the tree-base algorithm is tight, therefore, the optimal PROM bound for those cases is close to the optimal reduction error PROM. It is an open research question to explain why the bound tight.
The same technique presented can be applied to various multi-agent systems other than the consensus models. 
The derived reduction error product form and  bound can be the basis for additional optimization methods, such as convex relaxations.
%

\bibliographystyle{plain}        
\bibliography{thesis_refs}           

\appendix
\section{Proof of Theorem \ref{PROM error system}}
In this Appendix, we present the proof of Theorem \ref{PROM error system}.
The proof is based on the matrix inverse lifting lemma along with a projection inversion corollary that we present here.

\begin{defn}[Projected Schur complement]\label{-Let-Schur complement}
Let $M\in\mathbb{C}^{n\times n}$ and let $P$ and $Q$ be projections
such that $P^\top P=I_{r}$, $Q^\top Q=I_{n-r}$ and
$Q^\top P=0$. Then for $P^\top MP$ invertible, we define the \emph{projected
Schur complement} of $S$ by $P$ as
\begin{equation}
S\left(M,P\right)\triangleq M_{QQ}-M_{QP}M_{PP}^{-1}M_{PQ},
\end{equation}
and the \textit{projected Schur complement} of $S$ by $Q$ as
\begin{equation}
S\left(M,Q\right)\triangleq M_{PP}-M_{PQ}M_{QQ}^{-1}M_{QP},
\end{equation}
where $M_{PP}\triangleq P^\top MP$, $M_{PQ}\triangleq P^\top MQ$, $M_{QP}\triangleq Q^\top MP$
and $M_{QQ}\triangleq Q^\top MQ$.

\end{defn}

With the definition of the projected Schur complement we can derive the following corollary of the matrix inversion Lemma \cite{bernstein2009matrix}.

\begin{lem}
Let $M\in\mathbb{C}^{n\times n}$ and $M^{-1}$ be a matrix and its inverse with corresponding block structures 
\begin{equation*}
M=\left[\begin{array}{cc}
M_{11} & M_{12}\\
M_{21} & M_{22}
\end{array}\right]
\,\,\,\,\,\,\,\,\,\,\,\,\,\,
M^{-1}=\left[\begin{array}{cc}
W & X\\
Y & Z
\end{array}\right],
\end{equation*}
then 
\begin{align}
W & = (M_{11}-M_{12}M_{22}^{-1}M_{21})^{-1},
\end{align}
and 
\begin{equation}
X = -M_{11}^{-1}M_{12}Z.
\end{equation}
\end{lem}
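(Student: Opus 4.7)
The plan is to derive both block-inverse identities directly from the defining identity $M M^{-1} = I_n$ by $2{\times}2$ block expansion. Multiplying out the product in the given block partition produces four coupled matrix equations in $W,X,Y,Z$:
\begin{align*}
M_{11} W + M_{12} Y &= I, \\
M_{11} X + M_{12} Z &= 0, \\
M_{21} W + M_{22} Y &= 0, \\
M_{21} X + M_{22} Z &= I.
\end{align*}

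The formula $X = -M_{11}^{-1} M_{12} Z$ will fall out of the second equation by left-multiplying with $M_{11}^{-1}$. For the first formula, my plan is to eliminate $Y$ by using the third equation to write $Y = -M_{22}^{-1} M_{21} W$, provided $M_{22}$ is invertible. Substituting into the first equation produces $(M_{11} - M_{12} M_{22}^{-1} M_{21}) W = I$, and inverting the Schur-complement factor on the left yields the claimed expression for $W$.

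The only subtle point will be the implicit invertibility hypotheses: the identity for $X$ requires $M_{11}$ to be invertible, while the identity for $W$ requires $M_{22}$ to be invertible so that the Schur complement is well defined. Since this lemma is to be invoked inside the proof of Theorem \ref{PROM error system} with $M$ assembled from $\Phi = sI_n - A$ and the orthogonal projectors $P,Q$, these invertibility conditions will hold for all $s$ off a finite set of values (the eigenvalues of $\Phi$ and of the relevant projected blocks), which is sufficient for the rational transfer-function manipulations that follow. Beyond careful block multiplication, no further machinery is needed; the hardest aspect is simply performing the Schur-complement inversion on the correct side so that the resulting expression matches the desired form.
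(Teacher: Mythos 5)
Your proof is correct and is the standard derivation: the paper itself states this lemma without proof, simply citing it as the matrix inversion lemma from Bernstein's reference, and the block expansion of $MM^{-1}=I_n$ followed by elimination of $Y$ is exactly the argument that citation encapsulates. You are also right to flag the implicit invertibility of $M_{11}$ and $M_{22}$, which in the paper's application (blocks of $[P\;Q]^\top\Phi[P\;Q]$ with $\Phi=sI_n-A$) holds for all but finitely many $s$, as you note.
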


\begin{cor}\label{projection inversion corollary} Consider a
matrix $M\in\mathbb{C}^{n\times n}$, and let $P$ and $Q$ be matrices
such that $P^\top P=I_{r}$, $Q^\top Q=I_{n-r}$ and
$Q^\top P=0$. Then
\begin{align}
P^\top M^{-1}P & =S^{-1}\left(M,Q\right),
\end{align}
and 
\begin{equation}
P^\top M^{-1}Q=-M_{PP}^{-1}M_{PQ}\left(Q^\top M^{-1}Q\right).
\end{equation}
\end{cor}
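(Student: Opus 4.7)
My plan is to reduce the corollary to a direct application of the preceding block-inversion lemma by recognizing that $[P, Q]$ forms an orthogonal $n \times n$ matrix.

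First, I would assemble the matrix $U \triangleq [P \;\; Q] \in \mathbb{R}^{n \times n}$. The three hypotheses $P^\top P = I_r$, $Q^\top Q = I_{n-r}$, and $Q^\top P = 0$ (together with its transpose $P^\top Q = 0$) imply $U^\top U = I_n$, so $U$ is orthogonal and $U^{-1} = U^\top$. Conjugating $M$ by $U$ then produces the $2 \times 2$ block matrix
\begin{equation*}
U^\top M U = \begin{bmatrix} M_{PP} & M_{PQ} \\ M_{QP} & M_{QQ} \end{bmatrix},
\end{equation*}
using the notation from Definition \ref{-Let-Schur complement}. By the same orthogonality, the inverse is
\begin{equation*}
U^\top M^{-1} U = (U^\top M U)^{-1} = \begin{bmatrix} P^\top M^{-1} P & P^\top M^{-1} Q \\ Q^\top M^{-1} P & Q^\top M^{-1} Q \end{bmatrix},
\end{equation*}
so the four blocks of the inverse are exactly the four quantities of interest.

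Next I would apply the preceding block-inversion lemma with $M_{11} = M_{PP}$, $M_{12} = M_{PQ}$, $M_{21} = M_{QP}$, $M_{22} = M_{QQ}$, and with $W, X, Y, Z$ identified as the corresponding blocks of $U^\top M^{-1} U$. The lemma's formula for $W$ yields
\begin{equation*}
P^\top M^{-1} P = (M_{PP} - M_{PQ} M_{QQ}^{-1} M_{QP})^{-1} = S^{-1}(M, Q),
\end{equation*}
which is the first claim. The lemma's formula for $X$, combined with the identification $Z = Q^\top M^{-1} Q$, yields the second claim directly:
\begin{equation*}
P^\top M^{-1} Q = -M_{PP}^{-1} M_{PQ}\,(Q^\top M^{-1} Q).
\end{equation*}

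There is essentially no obstacle beyond the bookkeeping; the one invertibility subtlety is that the lemma requires $M_{QQ}$ to be invertible for the Schur complement $S(M,Q) = M_{PP} - M_{PQ} M_{QQ}^{-1} M_{QP}$ to make sense, and $M_{PP}$ to be invertible for the second identity. These are implicit regularity assumptions inherited from the lemma and Definition \ref{-Let-Schur complement}, and no additional argument is needed.
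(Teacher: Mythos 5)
Your proposal is correct and follows exactly the paper's route: the paper's proof likewise applies the block matrix inversion lemma to $\tilde{M}=[P\;Q]^\top M[P\;Q]$, and your write-up simply makes explicit the orthogonality of $U=[P\;Q]$ and the resulting identification $(U^\top MU)^{-1}=U^\top M^{-1}U$, which the paper leaves implicit. No substantive difference.
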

\begin{proof} Apply
the matrix inversion Lemma  to the matrix
$\tilde{M}=\left[PQ\right]^\top M\left[PQ\right]$ and obtain $P^\top  M^{-1}P=S^{-1}\left(M,Q\right)$ and $P^\top M^{-1}Q=-M_{PP}^{-1}M_{PQ}\left(Q^\top M^{-1}Q\right)$. $\qed$
\end{proof}

We denote a \emph{matrix lifting} $f_P:\R^{r\times r}\rightarrow\R^{n\times n}$ as the function $f_P\left( M \right) \triangleq PMP^\top$. The following lemma extends the matrix inversion lemma to the matrix
lifting 
of $\left(P^\top MP\right)^{-1}$.

\begin{lem}[Matrix Inverse Lifting]\label{The Matrix Lifting Lemma}
Consider a  matrix $M\in\mathbb{C}^{n\times n}$ and let $P$
and $Q$ be matrices such that $P^\top P=I_{r}$, $Q^\top Q=I_{n-r}$
and $Q^\top P=0$. Then
\begin{equation}
P\left(P^\top MP\right)^{-1}P^\top=\Upsilon\left(M,Q\right),\label{eq:matrix projection lifting}
\end{equation}
where we define $\Upsilon\left(M,Q\right)$ as,
\[
\Upsilon\left(M,Q\right)\triangleq M^{-1}-M^{-1}Q\left(Q^\top M^{-1}Q\right)^{-1}Q^\top M^{-1}.
\]
\end{lem}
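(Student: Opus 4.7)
The plan is to expand both sides of the claimed identity in the orthogonal block basis associated with $U = \begin{bmatrix} P & Q \end{bmatrix}$, which is orthogonal since $P^\top P = I_r$, $Q^\top Q = I_{n-r}$, and $Q^\top P = 0$; consequently $PP^\top + QQ^\top = I_n$. I adopt the shorthand $M_{PP} = P^\top M P$, $M_{PQ} = P^\top M Q$, $M_{QP} = Q^\top M P$, $M_{QQ} = Q^\top M Q$, and, for the blocks of $U^\top M^{-1} U$, $W = P^\top M^{-1} P$, $X = P^\top M^{-1} Q$, $Y = Q^\top M^{-1} P$, $Z = Q^\top M^{-1} Q$.

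First I would rewrite $M^{-1} = (PP^\top + QQ^\top) M^{-1} (PP^\top + QQ^\top) = PWP^\top + PXQ^\top + QYP^\top + QZQ^\top$, and correspondingly $M^{-1} Q = PX + QZ$ and $Q^\top M^{-1} = YP^\top + ZQ^\top$. Substituting these into $\Upsilon(M,Q)$ and using $Z^{-1} Z = I_{n-r}$,
\begin{align*}
M^{-1} Q (Q^\top M^{-1} Q)^{-1} Q^\top M^{-1} &= (PX + QZ) Z^{-1} (YP^\top + ZQ^\top) \\
&= PXZ^{-1}YP^\top + PXQ^\top + QYP^\top + QZQ^\top,
\end{align*}
so subtracting from the expansion of $M^{-1}$ cancels the cross terms and the $QZQ^\top$ term, leaving $\Upsilon(M,Q) = P(W - XZ^{-1}Y)P^\top$.

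It then remains to show $W - XZ^{-1}Y = M_{PP}^{-1}$, which is where I would invoke Corollary \ref{projection inversion corollary}. That corollary gives $X = -M_{PP}^{-1} M_{PQ} Z$, so $XZ^{-1}Y = -M_{PP}^{-1} M_{PQ} Y$. Separately, the $(1,1)$ block of the identity $(U^\top M U)(U^\top M^{-1} U) = I_n$ reads $M_{PP} W + M_{PQ} Y = I_r$; premultiplying by $M_{PP}^{-1}$ yields $W + M_{PP}^{-1} M_{PQ} Y = M_{PP}^{-1}$. Combining these two relations gives $W - XZ^{-1}Y = W + M_{PP}^{-1} M_{PQ} Y = M_{PP}^{-1}$, so $\Upsilon(M,Q) = P M_{PP}^{-1} P^\top = P(P^\top M P)^{-1} P^\top$, matching the left-hand side.

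The main obstacle is not computational but bookkeeping: all four blocks of $U^\top M^{-1} U$ must be tracked in the general (non-symmetric) case where $Y \neq X^\top$, and the invertibility hypotheses implicit in both sides — $M_{PP}$ invertible for the LHS, $Z = Q^\top M^{-1} Q$ invertible for the RHS, and $M$ itself invertible — must be carried throughout. Once these are in place, the derivation collapses to the two block identities above together with the orthogonal decomposition $PP^\top + QQ^\top = I_n$.
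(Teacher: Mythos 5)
Your proof is correct and follows essentially the same route as the paper: both expand $\Upsilon(M,Q)$ in the orthogonal basis $\begin{bmatrix}P & Q\end{bmatrix}$ via $PP^\top+QQ^\top=I_n$, observe that everything except the $P$--$P$ block cancels, and identify the surviving block $W-XZ^{-1}Y$ (the projected Schur complement $S(M^{-1},Q)$) with $(P^\top MP)^{-1}$. The only cosmetic difference is that the paper invokes the first relation of Corollary \ref{projection inversion corollary} applied to $M^{-1}$ at the last step, whereas you re-derive that identity from the corollary's second relation together with the $(1,1)$ block of $(U^\top MU)(U^\top M^{-1}U)=I_n$; your explicit tracking of the implicit invertibility assumptions on $M$, $M_{PP}$, and $Q^\top M^{-1}Q$ is a welcome addition.
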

\begin{proof}We have $I_{n}=PP^\top+QQ^\top$ such that
\begin{align*}
\Upsilon\left(M,Q\right) & =\left(PP^\top+QQ^\top\right)\Upsilon\left(M,Q\right)\left(PP^\top+QQ^\top\right)\\
 &\hspace{-.5cm} =PP^\top\Upsilon\left(M,Q\right)PP^\top +PP^\top\Upsilon\left(M,Q\right)QQ^\top\\
 &\hspace{-.5cm} +QQ^\top\Upsilon\left(M,Q\right)PP^\top +QQ^\top\Upsilon\left(M,Q\right)QQ^\top.
\end{align*}
Evaluating the four expressions in the sum we get
\begin{align*}
&P^\top\Upsilon\left(M,Q\right)Q  =\\ 
& P^\top M^{-1}Q- P^\top M^{-1}Q\left(Q^\top M^{-1}Q\right)^{-1}Q^\top   M^{-1}Q =0,\\
&Q^\top\Upsilon\left(M,Q\right)P  =\\
&\ Q^\top M^{-1}P- Q^\top M^{-1}Q\left(Q^\top M^{-1}Q\right)^{-1}Q^\top   M^{-1}P =0,\\
&Q^\top\Upsilon\left(M,Q\right)Q  =\\
& Q^\top M^{-1}Q- Q^\top M^{-1}Q\left(Q^\top M^{-1}Q\right)^{-1}Q^\top   M^{-1}Q =0,\\
&P^\top\Upsilon\left(M,Q\right)P  =\\
& P^\top M^{-1}P- P^\top M^{-1}Q\left(Q^\top M^{-1}Q\right)^{-1}Q^\top   M^{-1}P.
\end{align*}
We observe that the last term $P^\top\Upsilon\left(M,Q\right)P$ 
is the projected Schur complement $S\left(M^{-1},Q\right)$ (Def. \ref{-Let-Schur complement}).
From the projection inversion corollary (Corollary \ref{projection inversion corollary})
we then obtain
\begin{align*}
P^\top\Upsilon\left(M,Q\right)P & =\left(P^\top MP\right)^{-1},
\end{align*}
and therefore
\[
\Upsilon\left(M,Q\right)=P\left(P^\top MP\right)^{-1}P^\top.
\]
$\qed$
\end{proof}

We are now prepared to proceed with the proof of Theorem \ref{PROM error system}.
\begin{proof} We begin by proving the first part of the theorem, i.e.,
\begin{equation}
\hat{\Sigma}_{e}(\Sigma,P)=C\Phi^{-1}Q(Q^\top\Phi^{-1}Q)^{-1}Q^\top\Phi^{-1}B.\label{eq:PROM error TFM appendix}
\end{equation}
The error system TFM is 
\begin{align*}
\hat\Sigma_{e} & =\hat\Sigma_{r}-\hat\Sigma\\
 & =CP\left(sI_{r}-P^\top AP\right)^{-1}P^\top B-C\Phi^{-1}B\\
 & =CP\left(P^\top\Phi P\right)^{-1}P^\top B-C\Phi^{-1}B\\
 & =C\left(P\left(P^\top\Phi P\right)^{-1}P^\top-\Phi^{-1}\right)B.
\end{align*}
We now employ the matrix inverse lifting lemma (Lemma \ref{The Matrix Lifting Lemma}),
%
\begin{align*}
P(P^\top\Phi P)^{-1}P^\top-\Phi^{-1} &
=\Phi^{-1}Q(Q^\top\Phi^{-1}Q)^{-1}Q^\top\Phi^{-1},
\end{align*} 
thus leading to the expression 
\eqref{eq:PROM error TFM appendix}.
Next, we prove the second part of the theorem, i.e.,
\begin{equation}
\Sigma_{e}(\Sigma,P)=\Theta(\Sigma,P)\Delta(\Sigma,P)\Gamma(\Sigma,P),
\end{equation} 
with the three realizations  
\begin{align*}
\Theta(\Sigma,P)&:=(A_{PP},A_{PQ},CP,CQ)\\
\Gamma(\Sigma,P)&:=(A_{PP},P^\top B,A_{QP},Q^\top B)\\
\Delta(\Sigma,P)&:=(A,Q,Q^\top,0_{p\times m}).
\end{align*} 
With $I=\left(Q^\top\Phi^{-1}Q\right)\left(Q^\top\Phi^{-1}Q\right)^{-1}$ we get
\begin{align*}
&C\Phi^{-1}Q \Xi(\Phi,Q) Q^\top\Phi^{-1}B  =\\
&\hat{\Theta}\left(\Sigma,P\right)\hat{\Delta}\left(\Sigma,P\right)\hat{\Gamma}\left(\Sigma,P\right)
\end{align*}
where $\Xi(\Phi,Q)\triangleq\left(Q^\top\Phi^{-1}Q\right)^{-1}$ and
\begin{align*}
\hat{\Theta}\left(\Sigma,P\right)&\triangleq C\Phi^{-1}Q\Xi(\Phi,Q),\\
\hat{\Delta}\left(\Sigma,P\right)&\triangleq Q^\top\Phi^{-1}Q,\\\hat{\Gamma}\left(\Sigma,P\right)&\triangleq\Xi(\Phi,Q)Q^\top\Phi^{-1}B.
\end{align*}
We have $I=PP^\top+QQ^\top$ such that
\begin{align*}
\Phi^{-1}Q\left(Q^\top\Phi^{-1}Q\right)^{-1} & =\left(PP^\top+QQ^\top\right)\Phi^{-1}Q\Xi(\Phi,Q)\\
 & =Q+P\left(P^\top\Phi^{-1}Q\right)\Xi(\Phi,Q).
\end{align*}
From the projection inversion corollary (Corollary \ref{projection inversion corollary}), we have
%
\begin{align*}
P^\top\Phi^{-1}Q & =-\Phi_{PP}^{-1}\Phi_{PQ}\left(Q^\top\Phi^{-1}Q\right)\\
 & =\left(sI_{r}-P^\top AP\right)^{-1}P^\top AQ\left(Q^\top\Phi^{-1}Q\right)
\end{align*}
and
\begin{equation}
\Phi^{-1}Q\Xi(\Phi,Q)=Q+P\left(sI_{r}-P^\top AP\right)^{-1}P^\top AQ,
\end{equation}
such that 
\[
\hat{\Theta}(\Sigma,P)=CP\left(sI_{r}-P^\top AP\right)^{-1}P^\top AQ+CQ,
\] which is the TFM with realization $\Theta(\Sigma,P):=\left(A_{PP},A_{PQ},CP,CQ\right)$.
Similarly we get
\begin{align*}
\hat{\Gamma}(\Sigma,P) & =\left[Q^\top\Phi^{-1}Q\right]^{-1}Q^\top\Phi^{-1}B\\
 & =Q^\top AP\left(sI_{r}-P^\top AP\right)^{-1}P^\top B+Q^\top B,
\end{align*}
and the realization $\Gamma(\Sigma,P):=\left(A_{PP},P^\top B,A_{QP},Q^\top B\right)$.
Finaly $\hat{\Delta}(\Sigma,P)=Q\Phi^{-1}Q$
has realzation $\Delta(\Sigma,P):=\left(A,Q,Q^\top,0_{p\times m}\right)$. $\qed$
 \end{proof}

\end{document}